\definecolor{darkred}{rgb}{0.8,0.1,0.1}
\definecolor{Gray}{gray}{0.92}
\definecolor{Gray2}{gray}{0.75}
\definecolor{maroon}{cmyk}{0,0.87,0.68,0.32}
\newtheorem{definition}{Definition}
\newtheorem{proposition}{Proposition}
\newtheorem{lemma}[proposition]{Lemma}
\newtheorem{theorem}[proposition]{Theorem}
\newtheorem{remark}{Remark}
\newtheorem{corollary}[proposition]{Corollary}
\newenvironment{proof}{\noindent \textit{{Proof.}~}}{\hfill $\square$}
\def\squareforqed{\hbox{\rlap{$\sqcap$}$\sqcup$}}
\def\qed{\ifmmode\squareforqed\else{\unskip\nobreak\hfil
\penalty50\hskip1em\null\nobreak\hfil\squareforqed
\parfillskip=0pt\finalhyphendemerits=0\endgraf}\fi}
\def\endenv{\ifmmode\;\else{\unskip\nobreak\hfil
\penalty50\hskip1em\null\nobreak\hfil\;
\parfillskip=0pt\finalhyphendemerits=0\endgraf}\fi}
\newcommand{\nc}{\newcommand}
\nc{\rnc}{\renewcommand}
\nc{\bra}[1]{\langle#1|}
\nc{\ket}[1]{|#1\rangle}
\nc{\ketbra}[2]{|#1\rangle\!\langle#2|}
\nc{\braket}[2]{\langle#1|#2\rangle}
\nc{\braandket}[3]{\langle #1|#2|#3\rangle}
\nc{\proj}[1]{| #1\rangle\!\langle #1 |}
\nc{\avg}[1]{\langle#1\rangle}
\nc{\Rank}{\operatorname{Rank}}
\nc{\smfrac}[2]{\mbox{$\frac{#1}{#2}$}}
\nc{\tr}{\operatorname{Tr}}
\nc{\ox}{\otimes}
\nc{\cA}{{\cal A}}
\nc{\cB}{{\cal B}}
\nc{\cC}{{\cal C}}
\nc{\cD}{{\cal D}}
\nc{\cE}{{\cal E}}
\nc{\cF}{{\cal F}}
\nc{\cG}{{\cal G}}
\nc{\cH}{{\cal H}}
\nc{\cI}{{\cal I}}
\nc{\cJ}{{\cal J}}
\nc{\cK}{{\cal K}}
\nc{\cL}{{\cal L}}
\nc{\cM}{{\cal M}}
\nc{\cN}{{\cal N}}
\nc{\cO}{{\cal O}}
\nc{\cP}{{\cal P}}
\nc{\cQ}{{\cal Q}}
\nc{\cR}{{\cal R}}
\nc{\cS}{{\cal S}}
\nc{\cT}{{\cal T}}
\nc{\cV}{{\cal V}}
\nc{\cU}{{\cal U}}
\nc{\cX}{{\cal X}}
\nc{\cY}{{\cal Y}}
\nc{\cZ}{{\cal Z}}
\nc{\cW}{{\cal W}}
\nc{\RR}{{{\mathbb R}}}
\nc{\CC}{{{\mathbb C}}}
\nc{\FF}{{{\mathbb F}}}
\nc{\NN}{{{\mathbb N}}}
\nc{\ZZ}{{{\mathbb Z}}}
\nc{\QQ}{{{\mathbb Q}}}
\nc{\UU}{{{\mathbb U}}}
\nc{\EE}{{{\mathbb E}}}
\nc{\id}{{\operatorname{id}}}
\nc{\supp}{{\operatorname{supp}}}
\def\ve{\varepsilon}
\nc{\plus}{{\scalebox{0.7}{\rm +}}}
\nc{\PSD}{\HERM_{\plus}}
\nc{\PD}{\HERM_{\plus\plus}}
\nc{\polarPSD}[1]{{#1}_{\plus}^{\circ}}
\nc{\polarPSDre}[1]{{#1}_{\plus}^{\star}}
\nc{\polarPD}[1]{{#1}_{\plus\plus}^{\circ}}
\nc{\HERM}{\mathscr{H}}
\nc{\cvxset}{\mathscr{C}}
\nc{\density}{\mathscr{D}}
\nc{\subdensity}{\mathscr{D}_\bullet}
\nc{\bcC}{\cC}
\nc{\Meas}{{\scriptscriptstyle \rm M}}
\nc{\Proj}{{{\scriptscriptstyle \rm P}}}
\nc{\RM}{{{\mathscr{R}}}}
\nc{\sK}{{{\mathscr{K}}}}
\nc{\sS}{{{\mathscr{S}}}}
\nc{\sT}{{{\mathscr{T}}}}
\nc{\sA}{{{\mathscr{A}}}}
\nc{\sB}{{{\mathscr{B}}}}
\nc{\sC}{{{\mathscr{C}}}}
\nc{\sE}{{{\mathscr{E}}}}
\nc{\sL}{{{\mathscr{L}}}}
\nc{\sG}{{{\mathscr{G}}}}
\nc{\sF}{{{\mathscr{F}}}}
\nc{\sI}{{{\mathscr{I}}}}
\nc{\sN}{{{\mathscr{N}}}}
\nc{\sM}{{{\mathscr{M}}}}
\nc{\END}{\operatorname{End}}
\nc{\PERM}{\mathfrak{\sigma}}
\nc{\Cone}{\text{\rm Cone}}
\nc{\sep}{{\SEP}}
\nc{\BS}{{\scriptscriptstyle \rm {BS}}}
\nc{\Sand}{{\scriptscriptstyle  \rm S}}
\nc{\Petz}{{\scriptscriptstyle  \rm P}}
\nc{\Hypo}{{\scriptscriptstyle  \rm H}}
\nc{\DD}{{{\mathbb D}}}
\nc{\suchthat}{\text{\rm s.t.}}
\nc{\PPT}{\text{\rm PPT}}
\nc{\Rains}{\text{\rm Rains}}
\nc{\WD}{\text{\rm WD}}
\nc{\new}{\text{\rm new}}
\nc{\sfT}{\mathsf T}
\nc{\SEP}{\text{\rm SEP}}
\nc{\PSEP}{\text{\rm PSEP}}
\nc{\CPTP}{\text{\rm CPTP}}
\nc{\POVM}{\text{\rm POVM}}
\nc{\PVM}{\text{\rm PVM}}
\nc{\CP}{\text{\rm CP}}
\nc{\adv}{\text{\rm adv}}
\nc{\spec}{\text{\rm spec}}
\nc{\poly}{\text{\rm poly}}
\nc{\End}{\operatorname{End}}
\nc{\Par}{\operatorname{Par}}
\nc{\RNG}{\operatorname{RNG}}
\nc{\epi}{\boldsymbol{\operatorname{epi}}}
\nc{\op}{\boldsymbol{\operatorname{op}}}
\newcommand{\reg}{\infty}
\newcommand{\amo}{\text{\rm amo}}
\newcommand{\Renyi}{R\'{e}nyi }
\newcommand{\revh}{\underline{h}}
\begin{document}
\title{Adversarial quantum channel discrimination}

\author{Kun Fang}
\email{kunfang@cuhk.edu.cn}
\affiliation{School of Data Science, The Chinese University of Hong Kong, Shenzhen, Guangdong, 518172, China}

\author{Hamza Fawzi}
\email{h.fawzi@damtp.cam.ac.uk}
\affiliation{Department of Applied Mathematics and Theoretical Physics,  University of Cambridge, Cambridge CB3 0WA, United Kingdom}

\author{Omar Fawzi}
\email{omar.fawzi@ens-lyon.fr}
\affiliation{Inria, ENS de Lyon, UCBL, LIP, 69342, Lyon Cedex 07, France}

\date{\today}

\begin{abstract}
{We introduce a new framework for quantum channel discrimination in an adversarial setting, where the tester plays against an adversary. We show that in asymmetric hypothesis testing, the optimal type-II error exponent is precisely characterized by a new notion of quantum channel divergence (termed the minimum output channel divergence). This serves as a direct analog of the quantum Stein's lemma in this new framework, and complements previous studies on ``best-case'' channel discrimination, thereby providing a complete understanding of the ultimate limits of quantum channel discrimination.
Notably, the optimal error exponent can be achieved by simple non-adaptive adversarial strategies, and—despite the need for regularization—it remains efficiently computable and satisfies the strong converse property in general. Furthermore, we show that entropy accumulation, a powerful tool in quantum cryptography, can be reframed as an adversarial channel discrimination problem, establishing a new connection between quantum information theory and quantum cryptography.}
\end{abstract}

\maketitle

A quantum channel describes the most general physical transformation that quantum states can undergo. It captures not only idealized processes like unitary evolutions but also non-ideal, noisy, or dissipative processes, making them crucial for understanding real-world quantum systems. The task of quantum channel discrimination is to identify which member of a given set of quantum channels governs the physical process in a black-box scenario.  It is pivotal to various quantum information tasks and gives insights into a wide range of quantum protocols and applications~\cite{dariano_using_2001,chiribella_memory_2008,hayashi_discrimination_2009,harrow_adaptive_2010,wilde2020amortized,Cooney2016,wang2019resource,pirandola_fundamental_2019,WW2019,bergh_parallelization_2024}, from quantum foundations (e.g., exploring the quantum advantage of entanglement~\cite{piani2009all,takagi_operational_2019,bae_more_2019,skrzypczyk_robustness_2019}) to quantum communication (e.g., estimating quantum channel capacity~\cite{Wang2012,datta_smooth_2013,Wang2019,wang2019converse,fang2021geometric,fang2025towards}), quantum sensing (e.g., quantum reading and quantum illumination~\cite{pirandola_advances_2018,zhuang_ultimate_2020}) and even quantum biology~\cite{spedalieri_detecting_2020,pereira2020quantum}. 

{While previous studies on quantum channel discrimination have focused exclusively on the conventional ``best-case'' scenario~\cite{acin2001statistical,duan2007entanglement,chiribella_memory_2008,piani2009all,duan2009perfect,bae_more_2019,takagi_operational_2019,skrzypczyk_robustness_2019,fang2020chain,zhuang_ultimate_2020,bavaresco2021strict,debry2023experimental,sugiura2024power}, where the tester has full control over input states, intermediate operations, and final measurements, this work introduces a ``worst-case'' adversarial framework. This scenario is motivated by applications such as quantum device verification~\cite{zhu2019efficient}, where a state preparation device from an untrusted manufacturer is expected to produce a specific resource state (e.g., a Bell state, magic state, or coherent state), but may instead output arbitrary junk states if faulty or maliciously designed. 
In the adversarial setting, the roles of the tester and adversary are fundamentally different, resulting in a competitive game: the tester designs the measurement, while the adversary controls the choice of input states and potentially intermediate updates. 
This leads to a highly non-trivial minimax optimization problem, where the tester must design a universal, state-agnostic measurement strategy that  suppresses the worst-case error induced by any possible adversarial strategy.}

Technically, the discrimination task is formulated as a hypothesis testing problem, with the goal of finding discrimination strategies that give the optimal trade-off between two kinds of error probabilities, namely the probabilities of false detection (type-I error) and false rejection (type-II error)~\cite{hiai1991proper,Ogawa2000}. The challenge in adversarial discrimination arises from the potential capabilities of the adversary, who may access the channel's environmental system—the ancillary quantum system that interacts with the main system and can cause information leakage—or possess (potentially unbounded) quantum memory that stores partial information from previous rounds and uses it in subsequent rounds to enable adaptive strategies. This raises a central question: how effectively can the tester distinguish between two quantum channels while playing against such an adversary?

This work provides a complete answer to the above question in the context of asymmetric hypothesis testing. Specifically, we show that in adversarial quantum channel discrimination, the optimal type-II error decays exponentially at a rate characterized by a new notion of quantum channel divergence, termed the minimum output channel divergence, provided the type-I error remains within a fixed threshold. 
This result serves as a direct analog of the quantum Stein's lemma in adversarial channel discrimination. 
{It recovers the renowned quantum Stein's lemma for states~\cite{hiai1991proper,Ogawa2000} and complements previous studies on best-case channel discrimination~\cite{wilde2020amortized,WW2019,fang2020chain}, thereby providing a complete understanding of the ultimate limits of quantum channel discrimination.}
Notably, the optimal error exponent can be achieved via simple non-adaptive strategies by the adversary, {and despite the need for regularization, it remains efficiently
computable and satisfies the strong converse property in general. 
These features are unique in quantum information theory.  We also show that the entropy accumulation~\cite{arnon2018practical,liu2018device,george2022finite}, a powerful tool in quantum cryptography, can be reframed as an adversarial quantum channel discrimination problem, establishing a new connnection between quantum information theory and quantum cryptography and  providing a solution to the dual formulation of the open problem in~\cite{metger2022generalised}.}

\smallskip

\smallskip
\paragraph{Adversarial quantum channel discrimination.---}  Consider a scenario where a tester interacts with an untrusted quantum device that generates quantum states upon request. The device guarantees that the states are produced by either a quantum channel $\cN$ or a quantum channel $\cM$~\footnote{Adversarial state discrimination was previously studied in the classical case by~\cite{brandao2020adversarial}. In~\cite{brandao2020adversarial}, the problem is formulated in terms of convex sets $P, Q$ of probability distributions. For the quantum generalization (which is operationally well-motivated), we define the sets $P$ and $Q$ as the images of all input probability distributions of certain physical channels, denoted $\cN$ and $\cM$. This means that the channel $\cN$ takes as input a description of some $p \in P$ and outputs a sample from $p$. If $P$ is the convex hull of a finite set of probability distributions $(p_1, \dots, p_k)$, such a map $\cN$ can be constructed by letting $\cN(\ketbra{i}{j}) = \delta_{i=j} p_i$. For more general convex sets, the channel $\cN$ would require an infinite-dimensional input, which is beyond the scope of this paper.}. The tester is allowed to request multiple samples from the device and perform measurements to distinguish between the two cases.
{In this work, we consider the ``most extreme'' scenario, where the adversary has access to the environmental system of the channel and possesses full knowledge of the true channel. This operationally reflects the situation in which the tester plays against an adversary who manufactures the device. It also ensures that our results establish a fundamental limit that applies to all other possible variants.}

{More formally, let $\cN_{A\to B}$ and $\cM_{A\to B}$ be the two quantum channels to be distinguished, and let $\cU_{A\to BE}$ and $\cV_{A\to BE}$ denote their respective Stinespring dilations, with $E$ representing the environmental system. Let $\CPTP(X\!:\!Y)$ denote the set of completely positive and trace-preserving maps from input system $X$ to output system $Y$. An \emph{adaptive strategy} for adversarial discrimination proceeds as follows (see Figure~\ref{fig: adversarial setting}(a)). Suppose the device operates as channel $\cN$. Initially, the adversary prepares a quantum state via an operation $\cP^1 \in \CPTP(R_0E_0\!:\!A_1R_1)$, where $R_0$ and $E_0$ are trivial ($|R_0| = |E_0| = 1$), and sends system $A_1$ through the channel $\cU$, generating the output state $\cU \circ \cP^1$ and returning system $B_1$ to the tester. In the next round, the adversary performs an internal update $\cP^2 \in \CPTP(E_1R_1\!:\!A_2R_2)$, utilizing information stored in the quantum memory $R_1$ and the environmental system $E_1$ from the previous round. The adversary then sends system $A_2$ through the channel $\cU$ again, producing the output state $\cU \circ \cP^2 \circ \cU \circ \cP^1$ and returning system $B_2$ to the tester. This process can be repeated for $n$ rounds. A \emph{non-adaptive strategies} for adversarial discrimination is a subclass of adaptive strategies that disregards the environmental systems $E_i$ and performs no updates between rounds (see Figure~\ref{fig: adversarial setting}(b)), that is, taking the operations $\cP_i, \cQ_i$ ($i \geq 2$) simply as identity maps, with the choice $R_i = A_{i+1} \cdots A_n$. }

{After $n$ rounds of state generation, the tester obtains an overall state on $B_1\cdots B_n$ in their possession as:
\begin{align*}
\rho[\{\cP^i\}_{i=1}^n] := \tr_{R_nE_n} \prod_{i=1}^{n} \left[\cU_{A_i \to B_i E_i} \circ \cP^i_{R_{i-1}E_{i-1} \to A_{i}R_{i}}\right].
\end{align*}
Similarly, if the device is governed by $\cM$ and the internal operations by the adversary are given by $\cQ^i$, then the overall state is given by 
\begin{align*}
\sigma[\{\cQ^i\}_{i=1}^n] := \tr_{R_nE_n} \prod_{i=1}^{n} \left[\cV_{A_i \to B_i E_i} \circ \cQ^i_{R_{i-1}E_{i-1} \to A_{i}R_{i}}\right].  
\end{align*}
The tester needs to perform a binary quantum measurement $\{M_n, I-M_n\}$ on systems $B_1 \cdots B_n$ to determine which channel was used inside the black box.}

\begin{figure}[h]
    \centering
    \includegraphics[width=0.46\textwidth]{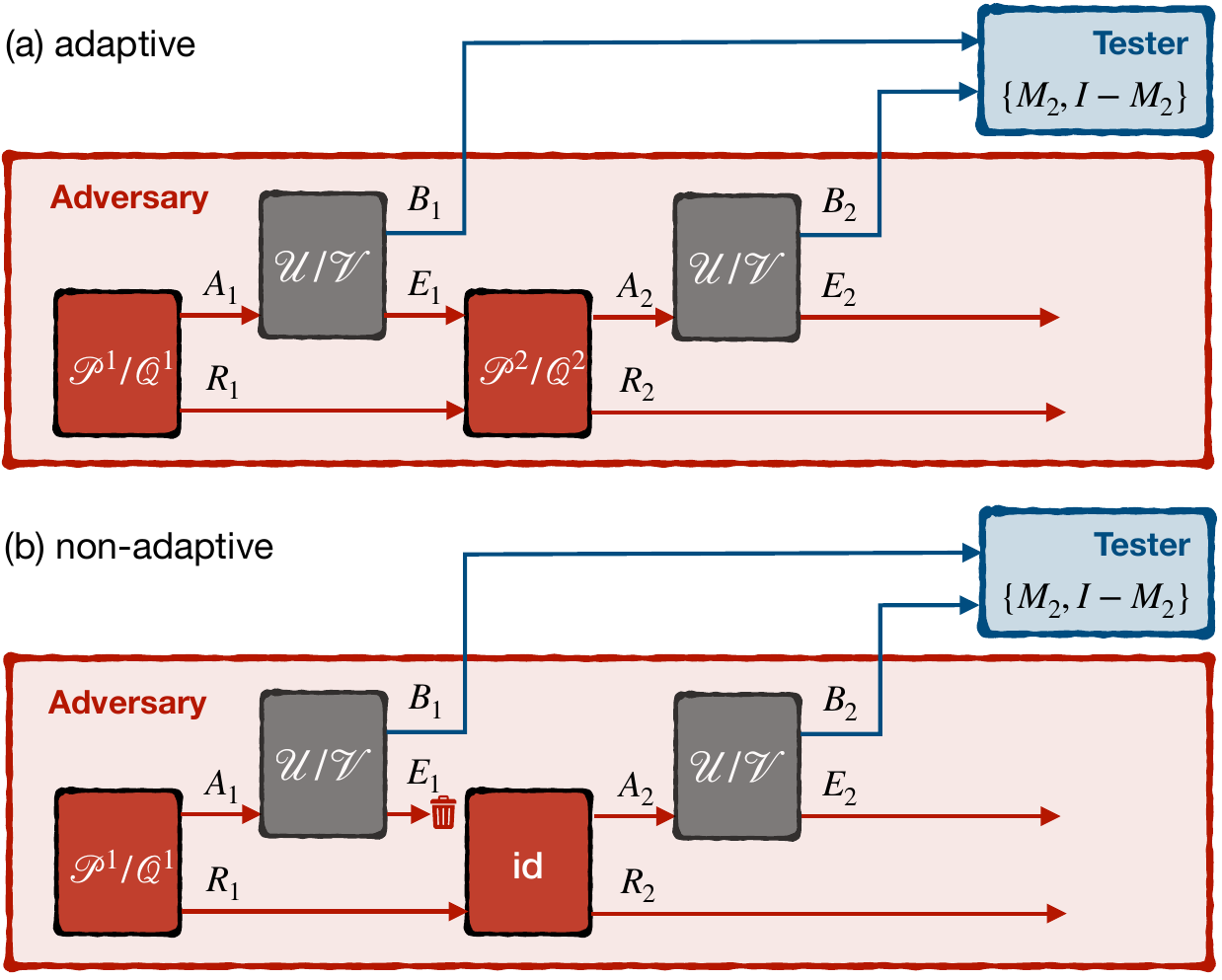}
        \caption{{Adaptive and non-adaptive strategies for adversarial quantum channel discrimination. Here, $\cU$ and $\cV$ (in gray) are the Stinespring dilations of the quantum channels $\cN$ and $\cM$, respectively; $\cP^i$ and $\cQ^i$ (in red) denote the adversary's internal operations; $\id$ is the identity map; and $\{M_2, I-M_2\}$ (in blue) represents the tester's quantum measurement. (a) Adaptive strategies: the adversary accesses the environmental systems $E_i$ and uses quantum memory $R_i$ to implement adaptive strategies. (b) Non-adaptive strategies: the adversary ignores the environmental systems $E_i$ and performs no updates (i.e., identity map) between rounds.}}
    \label{fig: adversarial setting}
\end{figure}

{Due to limited knowledge of the device's internal workings and the adversary's strategies, the tester only has access to partial information, knowing that their possessed state belongs to one of two sets:
$\sA_n := \{\rho[\{\cP^i\}_{i=1}^n] : \cP^i \in \CPTP(R_{i-1}E_{i-1}\!\!:\!\!A_iR_i), \forall R_i, \forall i\}$
or
$\sB_n := \{\sigma[\{\cQ^i\}_{i=1}^n] : \cQ^i \in \CPTP(R_{i-1}E_{i-1}\!\!:\!\!A_iR_i), \forall R_i, \forall i\}$
where the adversary's internal memory $R_i$ may have arbitrarily large dimension. This makes the discrimination problem particularly challenging, as the tester must design a state-agnostic measurement strategy that suppresses the worst-case error regardless of the adversary's choices. To capture this worst-case scenario, we define the type-I error as $\alpha(\sA_n, M_n) := \sup_{\rho_n \in \sA_n} \tr[\rho_n (I-M_n)]$ and the type-II error as $\beta(\sB_n, M_n) := \sup_{\sigma_n \in \sB_n} \tr[\sigma_n M_n]$.}

In asymmetric hypothesis testing, the goal is to analyze the optimal decay rate (also known as the Stein's exponent) of the type-II error probability when the type-I error is constrained to be below a fixed threshold $\ve$. Specifically, we consider the asymptotic behavior of
\begin{align*}
    \beta_{n, \ve}(\cN\|\cM) := \inf_{0\leq M_n\leq I}\left\{\beta(\sB_n, M_n): \alpha(\sA_n, M_n) \leq \ve\right\},
\end{align*}
as $n$ becomes large. {This leads to a minimax optimization problem that is generally difficult to analyze. Furthermore, because the adversary's strategies can be adaptive, the sets $\sA_n$ and $\sB_n$ are not permutation invariant, so existing results such as the quantum asymptotic equipartition property~\cite{fang2024generalized} do not directly apply.}

\smallskip
{\paragraph{Adversarial quantum Stein's lemma.---} We show that the optimal error exponent in adversarial channel discrimination is exactly characterized by a new notion of quantum channel divergence, termed the \emph{minimum output channel divergence}:
\begin{align*}
D^{\inf}(\cN\|\cM) := \inf \left\{ D(\cN(\rho)\| \cM(\sigma)): \rho, \sigma \in \density(A)\right\},
\end{align*}
where $\density(A)$ denotes the set of density operators on system $A$, and $D(\rho\|\sigma):= \tr[\rho(\log \rho - \log \sigma)]$ is the Umegaki relative entropy~\cite{umegaki1962conditional}. 
In contrast to best-case channel discrimination, where the optimal exponent is given by~\cite{WW2019,fang2020chain}, $\sup \{D(\cN_{A\to B}(\rho_{RA})\|\cM_{A\to B}(\rho_{RA})): \rho \in \density(RA)\}$, our new divergence independently optimizes over the input states for each channel and does not require a shared reference system $R$. This distinction turns out to be essential: as shown in the Supplemental Material~\cite{fang2025sm}, restricting to a common input state—structurally closer to the best-case divergence—fails to satisfy the chain rule, which is crucial for analyzing adaptive strategies and concluding our main results.}

\begin{theorem}(Adversarial quantum Stein's lemma.)\label{thm: adversarial quantum channel Steins lemma}
Let $\cN \in \CPTP(A\!:\!B)$ and $\cM \in \CP(A\!:\!B)$ (the set of completely positive maps). 
Then for any $\ve \in (0,1)$, 
\begin{align}\label{eq: Adversarial quantum Stein's lemma}
    \lim_{n\to \infty} - \frac{1}{n} \log \beta_{n, \ve}(\cN\|\cM) = D^{\inf,\reg}(\cN\|\cM),
\end{align}
where $D^{\inf,\reg}(\cN\|\cM) := \lim_{n \to \infty} D^{\inf}(\cN^{\ox n}\|\cM^{\ox n})/n$ can be efficiently computed via semidefinite programs. Moreover, the Stein's exponent can be achieved by non-adaptive strategies, indicating that adaptive strategies by the adversary offer no advantage over non-adaptive ones.
\end{theorem}

{The minimum output channel divergence $D^{\inf}(\cN\|\cM)$ vanishes if and only if there exist input states $\rho$ and $\sigma$ such that the output states $\cN(\rho)$ and $\cM(\sigma)$ are identical. While this can occur in certain
special cases, it is not generic. For example, in quantum device verification, the channel $\cN$ is designed to prepare a resource state, whereas the channel $\cM$ produces a junk state; by definition, these outputs cannot coincide.
Moreover, the minimum output channel divergence is generally non-additive~\cite{fang2025efficient}, which necessitates taking the regularized limit in the Stein's exponent $D^{\inf,\reg}(\cN\|\cM)$. While regularization typically complicates estimation, the fact that the optimal exponent is achieved by non-adaptive strategies leads to a significant simplification: it admits explicit convergence bounds, even in the presence of regularization. This exhibits a unique feature in quantum information theory and enables efficient computation of the exponent via a polynomial-time algorithm~\cite{fang2025sm}. In sharp contrast, for best-case channel discrimination, the optimal exponent is not even known to be computable by any means, regardless of computational resources.}

{The strong converse property is a highly desirable yet notoriously difficult feature in information theory, as it establishes a sharp boundary for achievable performance~\cite{cover1999elements}. Notably, the strong converse for channel discrimination in the best-case setting remains a major open problem~\cite{fang2025towards}, even after recent breakthroughs on the generalized quantum Stein's lemma~\cite{hayashi2024generalized,lami2024solutiongeneralisedquantumsteins}. It is also not even known for adversarial state discrimination in the classical setting~\cite{brandao2020adversarial}. The main technical challenge arises from the regularization of the optimal exponent, so the strong converse has only been restricted for certain degenerate cases where additivity holds and regularization is unnecessary~\cite{Cooney2016,wilde2020amortized}. Remarkably, we resolve this problem for the adversarial channel discrimination in full generality: Eq.~\eqref{eq: Adversarial quantum Stein's lemma} holds valid independently of $\ve$. To the best of our knowledge, this is the first time the strong converse property has been proven for a regularized exponent in quantum information theory. This also serves as a prior to establish later application in the relative entropy accumulation theorem, which requires explicit finite-size estimation without dependence of $\varepsilon$.}

\smallskip
\paragraph{Chain rules and proof outline for Theorem~\ref{thm: adversarial quantum channel Steins lemma}.---}{In the best-case channel discrimination, adaptive and non-adaptive strategies were studied independently and shown to be characterized by the amortized channel divergence and the regularized channel divergence, respectively~\cite{WW2019}. The equivalence of these two settings was fully resolved by some of us in~\cite{fang2020chain}, where the chain rule property of quantum relative entropy (in the upper bound direction) was established. In this work, we directly demonstrate the collapse of adaptive and non-adaptive strategies in the adversarial setting, in a manner resembling the chain rule of~\cite{fang2020chain}, but this time in the lower bound direction. This result represents the first chain rule of its kind for quantum relative entropy and complements previous findings, together providing a complete quantum analogue of the classical chain rule for relative entropy~\cite[Theorem 2.5.3]{cover1999elements}.}

Before the formal statement, we introduce a few notations. The \emph{measured relative entropy} is defined as
$D_{\Meas} (\rho\|\sigma) := \sup_{(\cX,M)} D(P_{\rho,M}\|P_{\sigma,M})$~\cite{donald1986relative,hiai1991proper},
where $D$ is the Kullback-Leibler divergence and the supremum is over all finite sets $\cX$ and POVMs $M = \{M_x\}_{x \in \cX}$ satisfying $M_x \geq 0$ and $\sum_{x \in \cX} M_x = I$, and $P_{\rho,M}(x) := \tr[M_x \rho]$. The \emph{measured \Renyi divergence} is
$D_{\Meas, \alpha} (\rho\|\sigma) := \sup_{(\cX,M)} D_{\alpha}(P_{\rho,M}\|P_{\sigma,M})$~\cite{Berta2017},
where $D_{\alpha}$ is the classical \Renyi divergence. The \emph{sandwiched \Renyi divergence} is
$D_{\Sand,\alpha}(\rho\|\sigma) := \frac{1}{\alpha-1}\log\tr[\sigma^{\frac{1-\alpha}{2\alpha}}\rho\sigma^{\frac{1-\alpha}{2\alpha}}]^\alpha$~\cite{muller2013quantum,wilde2014strong},
if $\supp(\rho) \subseteq \supp(\sigma)$, and $+\infty$ otherwise.  
Define $D_{\Meas,\alpha}^{\inf}(\cN\|\cM)$ and $D_{\Sand,\alpha}^{\inf,\reg}(\cN\|\cM)$ analogously to the minimum output channel divergence, but using the measured and sandwiched \Renyi divergences, respectively. 
Denote $\PSD$ as the set of positive semidefinite operators. Then we have the
chain rules as follows.

\begin{lemma}(Chain rules.)\label{lem: chain rule}
Let $\cN \in \CPTP(A\!:\!B)$, $\cM \in \CP(A\!:\!B)$, $\rho \in \density(RA)$, and $\sigma \in \PSD(RA)$. Let $\rho^\cN_{RB}=\cN(\rho_{RA})$ and $\sigma^\cM_{RB} = \cM (\sigma_{RA})$ be the channel output states.  Then for any $\alpha \in (0,+\infty)$, 
\begin{align*}
D_{\Meas,\alpha}(\rho^\cN_{RB}\|\sigma^\cM_{RB}) \geq D_{\Meas,\alpha}(\rho_R\|\sigma_R) + D_{\Meas,\alpha}^{\inf}(\cN\|\cM).
\end{align*} 
Moreover, for any $\alpha \in [1/2,\infty)$,
\begin{align*}
D_{\Sand,\alpha}(\rho^\cN_{RB}\|\sigma^\cM_{RB}) \geq D_{\Sand,\alpha}(\rho_R\|\sigma_R) + D_{\Sand,\alpha}^{\inf,\reg}(\cN\|\cM).
\end{align*}
When $\alpha=1$, the chain rules represent the results for the measured and Umegaki relative entropies, respectively.
\end{lemma}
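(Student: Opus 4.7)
The plan is to prove the measured R\'{e}nyi chain rule first via a sequential-measurement argument, and then lift it to the sandwiched R\'{e}nyi chain rule by regularization, using the generalized asymptotic equipartition property from \cite{fang2024generalized} to identify the two notions of regularized minimum output channel divergence.

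For the measured case, fix $\varepsilon>0$ and choose a POVM $\{L^R_x\}_x$ on $R$ that realizes $D_{\Meas,\alpha}(\rho_R\|\sigma_R)$ within $\varepsilon$. For each outcome $x$, the post-measurement $A$-states are $\rho_A^x := \tr_R[(L^R_x\otimes I_A)\rho_{RA}]/p_x$ and $\sigma_A^x := \tr_R[(L^R_x\otimes I_A)\sigma_{RA}]/q_x$ with $p_x := \tr[L^R_x\rho_R]$ and $q_x := \tr[L^R_x\sigma_R]$. Since $\cN$ and $\cM$ act on $A$ and commute with $L^R_x$, the induced post-measurement $B$-states are exactly $\rho_B^x = \cN(\rho_A^x)$ and $\sigma_B^x = \cM(\sigma_A^x)$, so $D_{\Meas,\alpha}(\rho_B^x\|\sigma_B^x) \ge D_{\Meas,\alpha}^{\inf}(\cN\|\cM)$. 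I would then pick, for each $x$, a POVM $\{L^{B,x}_y\}_y$ realizing $D_{\Meas,\alpha}(\rho_B^x\|\sigma_B^x)$ within $\varepsilon$, and assemble them into the joint POVM $\{L^R_x\otimes L^{B,x}_y\}_{x,y}$ on $RB$, which is valid since the two factors act on disjoint tensor components. The resulting classical joint distribution has marginal $p_x$ and conditional $\tr[L^{B,x}_y\rho_B^x]$, so the classical R\'{e}nyi chain rule $D_\alpha(P_{XY}\|Q_{XY}) \ge D_\alpha(P_X\|Q_X) + \inf_x D_\alpha(P_{Y|X=x}\|Q_{Y|X=x})$, valid for all $\alpha\in(0,\infty)$ and collapsing to the expectation form at $\alpha=1$, yields the desired inequality after $\varepsilon\to 0$.

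For the sandwiched case, I would invoke the Mosonyi--Ogawa identity $D_{\Sand,\alpha}(\rho\|\sigma) = \lim_{n\to\infty}\tfrac{1}{n}D_{\Meas,\alpha}(\rho^{\otimes n}\|\sigma^{\otimes n})$, which holds for $\alpha\in[1/2,\infty)$. Applying the just-proved measured chain rule to the $n$-fold tensor copies (with channels $\cN^{\otimes n}$, $\cM^{\otimes n}$), dividing by $n$, and sending $n\to\infty$, the left side converges to $D_{\Sand,\alpha}(\rho^\cN_{RB}\|\sigma^\cM_{RB})$, the $R$-term to $D_{\Sand,\alpha}(\rho_R\|\sigma_R)$, and the channel-divergence term to $D_{\Meas,\alpha}^{\inf,\reg}(\cN\|\cM)$, producing
\begin{equation*}
D_{\Sand,\alpha}(\rho^\cN_{RB}\|\sigma^\cM_{RB}) \ge D_{\Sand,\alpha}(\rho_R\|\sigma_R) + D_{\Meas,\alpha}^{\inf,\reg}(\cN\|\cM).
\end{equation*}
The final step is to invoke the generalized AEP of \cite{fang2024generalized} applied to the convex output sets $\cN^{\otimes n}(\density)$ and $\cM^{\otimes n}(\density)$, which shows $D_{\Meas,\alpha}^{\inf,\reg}(\cN\|\cM) = D_{\Sand,\alpha}^{\inf,\reg}(\cN\|\cM)$, upgrading the right-hand side to the stated form.

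The main obstacle is the direction of the measured chain rule: previously established chain rules of this form \cite{fang2020chain,fang2021geometric} give an \emph{upper} bound $D(\rho^\cN_{RB}\|\sigma^\cM_{RB}) \leq D(\rho_R\|\sigma_R) + \sup_\rho D(\cN(\rho)\|\cM(\rho))$ via a Donald-type variational identity in which the two states seen by the channels are forced to agree. Here we instead need the reverse inequality with an \emph{infimum} over genuinely independent inputs to $\cN$ and $\cM$; the outcome-adaptive POVM construction above is engineered precisely to decouple the $R$-witness from the channel-infimum witness, which is what makes the classical chain rule applicable without coupling the two input states. The remaining technical subtlety is exchanging the $n\to\infty$ limit with the infimum over input states when passing from measured to sandwiched, which is why the generalized AEP, rather than any elementary additivity argument, is the right tool.
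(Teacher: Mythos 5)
Your proof is correct, and for the measured part it takes a genuinely different route from the paper. The paper proves the $D_{\Meas,\alpha}$ chain rule abstractly: it forms the three set pairs $\{\rho_R\}$, $\cN(\density)$, $\{\rho^{\cN}_{RB}\}$ (and likewise for $\sigma$, $\cM$), checks the polar-set tensorization $\polarPSD{(\sA_1)}\ox\polarPSD{(\sA_2)}\subseteq\polarPSD{(\sA_3)}$ via $\cN^\dagger(Y_B)\leq I_A$ (reverse polars for $\alpha>1$), and then invokes the superadditivity lemmas of the companion work~\cite{fang2024generalized}. You instead construct an explicit two-stage adaptive POVM $\{L^R_x\ox L^{B,x}_y\}$ and reduce directly to the classical \Renyi chain rule in infimum form; the key observations — that $\cN_{A\to B}$ commutes with the $R$-measurement so the conditional $B$-states are exactly $\cN(\rho^x_A)$ and $\cM(\sigma^x_A)$ with \emph{independent} density-operator inputs, whence each conditional divergence is at least $D^{\inf}_{\Meas,\alpha}(\cN\|\cM)$ — are sound, and the classical inequality does hold for all $\alpha\in(0,\infty)$ with the sign analysis of $\frac{1}{\alpha-1}\log(\cdot)$ in both regimes. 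Your approach is more elementary and self-contained (it handles $\alpha<1$ and $\alpha>1$ uniformly, with no polar/reverse-polar case split), at the cost of being tailored to measured divergences of image sets rather than the general convex-set framework the paper's citation buys. The lift to the sandwiched case is essentially identical to the paper's: both use $D_{\Sand,\alpha}(\rho\|\sigma)=\lim_n\frac1n D_{\Meas,\alpha}(\rho^{\ox n}\|\sigma^{\ox n})$ for $\alpha\in[1/2,\infty)$, apply the measured chain rule to $\cN^{\ox n},\cM^{\ox n}$, and identify $\lim_n\frac1n D^{\inf}_{\Meas,\alpha}(\cN^{\ox n}\|\cM^{\ox n})$ with $D^{\inf,\reg}_{\Sand,\alpha}(\cN\|\cM)$ via~\cite[Lemma 28]{fang2024generalized}. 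Minor technicalities worth a sentence in a polished write-up: outcomes with $p_x=0$ or $q_x=0$ (and the case where some $D_{\Meas,\alpha}(\rho^x_B\|\sigma^x_B)=+\infty$, where ``within $\varepsilon$'' should be replaced by ``at least $K$''), but these do not affect the argument.
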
 

These chain rules can be viewed as a strengthening of the data processing inequality under partial trace. Notably, they are also tight in the sense that 
$\inf_{\rho,\sigma \in \density(RA)} \left[D_{\Sand,\alpha}(\rho^\cN_{RB}\|\sigma^\cM_{RB}) - D_{\Sand,\alpha}(\rho_R\|\sigma_R)\right] = D_{\Sand,\alpha}^{\inf,\reg}(\cN\|\cM)$.
In other words, the amortized channel divergence coincides with the regularized divergence, serving as the analog of~\cite[Corollary 3]{fang2020chain} in the worst-case scenario. {However, the proof of these new chain rules requires entirely different techniques~\cite{fang2025sm}.}

{With the new chain rules in hand, we can now outline the proof of Theorem~\ref{thm: adversarial quantum channel Steins lemma}. The first step is to simplify the minimax in $\beta_{n,\ve}(\cN\|\cM)$ by relating it to the quantum hypothesis testing relative entropy $D_{\Hypo,\ve}(\rho\|\sigma):=-\log\{\tr[\sigma M]: \tr[\rho(I-M)] \leq \ve,\, 0 \leq M \leq I\}$, which is mathematically more tractable. To do this, we show that the sets $\sA_n$ and $\sB_n$ are convex via explicit construction of adaptive strategies~\cite{fang2025sm}, and apply~\cite[Lemma 31]{fang2024generalized}. This yields
$-\log \beta_{n,\ve}(\cN\|\cM) = D_{\Hypo,\ve}(\sA_n\|\sB_n):= \inf_{\rho_n \in \sA_n,\, \sigma_n \in \sB_n} D_{\Hypo,\ve}(\rho_n\|\sigma_n)$. Then the proof of Theorem~\ref{thm: adversarial quantum channel Steins lemma} consists of achievable and converse parts.}

{For the achievable part, we consider the non-adaptive strategies, where the adversary ignores the environmental systems $E_i$ and performs no updates between rounds. This means that the sets $\sA_n$ and $\sB_n$ reduce to
$\sA_n' := \{\cN^{\ox n}(\rho_n): \rho_n \in \density(A^{\ox n})\}$ and $\sB_n' := \{\cM^{\ox n}(\sigma_n): \sigma_n \in \density(A^{\ox n})\}$. Using semidefinite programming duality, we can show that the sets $\{\sA_n'\}_n$ and $\{\sB_n'\}_n$ meet all the structural assumptions to apply the recent generalized asymptotic equipartition property in~\cite[Theorem 25]{fang2024generalized}. This allows us to conclude that the optimal type-II error exponent achieves the rate $D^{\inf,\reg}(\cN\|\cM)$.}

{For the converse part, we show that no adaptive strategy can achieve a type-II error exponent smaller than the rate $D^{\inf,\reg}(\cN\|\cM)$. The key step is to recursively apply the chain rule to obtain a lower bound on $D_{\Hypo,\ve}(\sA_n\|\sB_n)$.}
Denote the joint states before the $n$-th use of the channel by $\rho'_n := \tr_{R_n}\circ\;\cP^{n}\circ  \prod_{i=1}^{n-1}  \cU\circ \cP^i$ and $\sigma'_n := \tr_{R_n}\circ\;\cQ^{n}\circ \prod_{i=1}^{n-1}  \cV\circ \cQ^i$. Denote $\rho_n := \rho[\{\cP^i\}_{i=1}^n]$ and $\sigma_n := \sigma[\{\cQ^i\}_{i=1}^n]$. 
Then we have that $\rho_n = \tr_{R_nE_n} \circ \,\cU (\rho'_n) = \cN(\rho_n')$ and $\sigma_n = \tr_{R_nE_n} \circ \,\cV(\sigma_n') = \cM(\sigma_n')$. 
Note that for any $\cP^n$ and $\cQ^n$ we always have $\tr_{A_nR_n} \circ \, \cP^{n} = \tr_{E_{n-1}R_{n-1}}$ and $\tr_{A_nR_n} \circ \, \cQ^{n} = \tr_{E_{n-1}R_{n-1}}$.
This gives the relations that $\tr_{A_n}(\rho_n') = \rho_{n-1}$ and $\tr_{A_n}(\sigma_n') = \sigma_{n-1}$.
Applying the chain rule in Lemma~\ref{lem: chain rule} to $\rho_n$ and $\sigma_n$, we have $D_{\Sand,\alpha}(\rho_n\|\sigma_n)\geq D_{\Sand,\alpha}(\rho_{n-1}\|\sigma_{n-1}) + D^{\inf,\reg}_{\Sand,\alpha}(\cN\|\cM)$.
Recursively applying this relation $n$ times, we get $D_{\Sand,\alpha}(\rho_n\|\sigma_n) \geq n D^{\inf,\reg}_{\Sand,\alpha}(\cN\|\cM)$.
As this holds for any operations $\cP^i$ and $\cQ^i$, we get $D_{\Sand,\alpha}(\sA_n\|\sB_n) \geq n D^{\inf,\reg}_{\Sand,\alpha}(\cN\|\cM)$.
Using the relation of hypothesis testing relative entropy and the Petz \Renyi divergence in~\cite[Proposition 3]{qi2018applications} and the fact that Petz \Renyi divergence is no smaller than the sandwiched \Renyi divergence~\cite[Eq.(4.88)]{Tomamichel2015b}, we have for any $\alpha \in [1/2,1)$ and $\ve \in (0,1)$, that
$D_{\Hypo,\ve}(\sA_n\|\sB_n) \geq D_{\Sand,\alpha}(\sA_n\|\sB_n) + f(\ve,\alpha)$
where $f(\ve,\alpha) = -\log (\ve)\alpha/(\alpha-1)$.
Putting things together, we have the relation that
$D_{\Hypo,\ve}(\sA_n\|\sB_n) \geq n D^{\inf,\reg}_{\Sand,\alpha}(\cN\|\cM) + f(\ve,\alpha)$.
Taking the limits on both sides, we get
$\liminf_{n\to \infty} \frac{1}{n} D_{\Hypo,\ve}(\sA_n\|\sB_n) \geq \sup_{\alpha \in [1/2,1)}D^{\inf,\reg}_{\Sand,\alpha}(\cN\|\cM)$.
Finally, we relate the right-hand side to $D^{\inf,\reg}(\cN\|\cM)$ by noting that
$\sup_{\alpha \in [1/2,1)} D^{\inf,\reg}_{\Sand,\alpha}(\cN\|\cM) \geq \sup_{\alpha \in [1/2,1)} D^{\inf,\reg}_{\Meas,\alpha}(\cN\|\cM) = D^{\inf,\reg}(\cN\|\cM)$, where the inequality follows as $D_{\Meas,\alpha}(\rho\|\sigma) \leq D_{\Sand,\alpha}(\rho\|\sigma)$ for $\alpha \in [1/2,1)$, and the equality is a consequence of~\cite[Lemma 27, 28]{fang2024generalized}, applied to the image sets of the channels $\sA_n'$ and  $\sB_n'$. This concludes the proof of Theorem~\ref{thm: adversarial quantum channel Steins lemma}.

\smallskip

\paragraph{Relative entropy accumulation.---} 
The entropy accumulation theorem~\cite{dupuis2020entropy,metger2022generalised} is a technique to find bounds on the operationally relevant uncertainty present in the outputs of a sequential process as a sum of the worst case uncertainties of each step. It has been widely used in quantum cryptography~\cite{arnon2018practical,liu2018device,george2022finite}.
More specifically, a variant for the max-entropy $H_{\max}^{\ve}$ states that~\cite{metger2022generalised} for channels $\cN_{i} \in \CPTP(Y_{i-1}\!\!:\!\! Y_{i} S_i C_i)$, we have
$H_{\max}^{\ve}(S_1 \dots S_n | C_1 \dots C_n)_{\cN_n \circ \cdots \circ \cN_1(\rho_{Y_0})} \leq \sum_{i=1}^n \sup_{\omega_{Y_{i-1}}} H(S_i |C_i)_{\cN_i(\omega)} + O(\sqrt{n})$,
where $H_{\max}^{\ve}(S|C)_{\rho} \approx -\inf_{\sigma \in \density(C)} D_{\Hypo, \ve}(\rho_{SC} \| I_{S} \otimes \sigma_C)$\footnote{The exact definition is not in terms of $D_{\Hypo, \ve}$ but a closely related smoothed version of $D_{\Sand, 1/2}$. See Supplemental Materials for more details.
} and $H(S|C)_{\rho} = -\inf_{\sigma \in \density(C)} D(\rho_{SC} \| I_{S} \otimes \sigma_C)$. 
This naturally raises the question of whether such a statement can be generalized to divergences between arbitrary sequential processes of channels, rather than being restricted only to entropies (which corresponds to choosing $\cM$ in Figure~\ref{fig: adversarial setting} to be a replacer channel). This was first asked as an open question in~\cite{metger2022generalised} for the max-relative entropy. By extending the adversarial channel discrimination framework in Figure~\ref{fig: adversarial setting} to allow different channels $\cN_i$ and $\cM_i$ at each round, we can unify the relative entropy accumulation within this broader framework. This new perspective enables us to establish a relative entropy accumulation theorem for $D_{\Hypo, \ve}$ (a smoothed form of the min-relative entropy), giving an answer to the dual formulation of this open question. Specifically, we generalize the converse part of Theorem~\ref{thm: adversarial quantum channel Steins lemma} in two ways: we allow the channels applied at different steps to vary, and we compute explicit finite-size bounds. 

\begin{theorem}(Relative entropy accumulation theorem.)
Let $\cN_i \in \CPTP(A_i\!:\!A_{i+1}B_i), \cM_i \in \CP(A_i\!:\!A_{i+1}B_i)$ and $\rho, \sigma \in \density(A_1)$. Let $\ve \in (0,1)$, then it holds that
\begin{align*}
D&_{\Hypo, \ve} \bigg(\tr_{A_{n+1}} \circ \prod_{i=1}^n \cN_{i} (\rho_{A_1}) \bigg\| \tr_{A_{n+1}} \circ \prod_{i=1}^n \cM_{i}(\sigma_{A_1})\bigg)\\
& \geq \sum_{i=1}^n D^{\inf,\reg}(\tr_{A_{i+1}} \circ \cN_i \| \tr_{A_{i+1}} \circ \cM_i) - O(n^{2/3} \log n). 
\end{align*}
\end{theorem}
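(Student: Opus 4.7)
The plan is to generalize the converse half of Theorem~\ref{thm: adversarial quantum channel Steins lemma} in two directions: accommodating a different channel pair $(\cN_i,\cM_i)$ at each step, and extracting an explicit finite-size correction rather than only an asymptotic limit. I would combine three ingredients: (i) the inequality $D_{\Hypo,\ve}(\cdot\|\cdot) \geq D_{\Sand,\alpha}(\cdot\|\cdot) + (\log(1/\ve))\alpha/(\alpha-1)$ for $\alpha \in [1/2,1)$, already used in Theorem~\ref{thm: adversarial quantum channel Steins lemma}; (ii) the sandwiched chain rule from Lemma~\ref{lem: chain rule}, iterated over the $n$ steps; and (iii) a finite-size conversion from the sandwiched regularized divergence to the Umegaki one via blocking.

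For the chain-rule iteration, I would write $\rho_i := (\cN_i \circ \cdots \circ \cN_1)(\rho)$ on $A_{i+1}B_1\cdots B_i$ and $\bar\rho_i := \tr_{A_{i+1}}\rho_i$ on $B_1\cdots B_i$, with $\sigma_i, \bar\sigma_i$ defined analogously under the $\cM_j$. The key observation is that $\bar\rho_i = (\tr_{A_{i+1}}\circ \cN_i)(\rho_{i-1})$ while $\tr_{A_i}(\rho_{i-1}) = \bar\rho_{i-1}$, so Lemma~\ref{lem: chain rule} applied with the channels $\tr_{A_{i+1}}\circ \cN_i, \tr_{A_{i+1}}\circ \cM_i : A_i \to B_i$ and reference register $R = B_1\cdots B_{i-1}$ yields the telescoping recursion
\begin{align*}
D_{\Sand,\alpha}(\bar\rho_i \| \bar\sigma_i) & \geq D_{\Sand,\alpha}(\bar\rho_{i-1} \| \bar\sigma_{i-1}) \\
& \quad + D^{\inf,\reg}_{\Sand,\alpha}(\tr_{A_{i+1}}\circ \cN_i \| \tr_{A_{i+1}}\circ \cM_i).
\end{align*}
Iterating from $i = 1$ (seeded by $\bar\rho_0 = \bar\sigma_0 = 1$) and invoking (i) then lower-bounds $D_{\Hypo,\ve}(\bar\rho_n \| \bar\sigma_n)$ by the sum of the $n$ per-step $D^{\inf,\reg}_{\Sand,\alpha}$ terms plus $\alpha\log(1/\ve)/(\alpha-1)$.

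To replace each $D^{\inf,\reg}_{\Sand,\alpha}$ by $D^{\inf,\reg}$ up to a controlled error, I would use the blocking bound $D^{\inf,\reg}_{\Sand,\alpha}(\cN'\|\cM') \geq \frac{1}{k}D^{\inf}_{\Sand,\alpha}((\cN')^{\ox k}\|(\cM')^{\ox k})$ with $\cN'_i := \tr_{A_{i+1}}\circ \cN_i$, then combine (a) a second-order Taylor estimate around $\alpha=1$ giving $D^{\inf}_{\Sand,\alpha}((\cN')^{\ox k}\|(\cM')^{\ox k}) \geq D^{\inf}((\cN')^{\ox k}\|(\cM')^{\ox k}) - c_1(1-\alpha)k^2$, with (b) a quantitative AEP-type convergence $\frac{1}{k}D^{\inf}((\cN')^{\ox k}\|(\cM')^{\ox k}) \geq D^{\inf,\reg}(\cN'\|\cM') - c_2\log k / k$ from~\cite{fang2024generalized} applied to the tensor-product image sets $(\cN')^{\ox k}(\density)$. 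Each per-step correction is then $O((1-\alpha)k + \log k/k)$, so the overall correction is of order $n[(1-\alpha)k + \log k/k] + \log(1/\ve)/(1-\alpha)$. A two-stage optimization---setting $1-\alpha = \Theta(1/\sqrt{nk})$ first, then $k = \Theta(n^{1/3}(\log n)^{2/3})$---produces the claimed $O(n^{2/3}\log n)$ correction.

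The telescoping is essentially routine once Lemma~\ref{lem: chain rule} is available; the hard part will be securing the two finite-size estimates above with constants uniform over the $\inf$-optimizing states. The continuity bound (a) requires controlling a variance-type second-order term in the Taylor expansion of $D_{\Sand,\alpha}$ uniformly on the optimization domain, which may demand a mild regularization of $\cM_i$ to avoid rank-deficient outputs $\cM_i(\sigma)$. The rate-of-regularization bound (b) requires importing a quantitative finite-$k$ AEP estimate from~\cite{fang2024generalized}; since the image sets $(\cN')^{\ox k}(\density), (\cM')^{\ox k}(\density)$ already satisfy the permutation-invariant tensor-product assumptions verified in the achievable part of Theorem~\ref{thm: adversarial quantum channel Steins lemma}, that machinery should apply directly, at which point the optimization yielding $O(n^{2/3}\log n)$ is elementary.
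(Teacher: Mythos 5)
Your skeleton is the same as the paper's: bound $D_{\Hypo,\ve}$ by $D_{\Sand,\alpha} + \frac{\alpha}{\alpha-1}\log(1/\ve)$, telescope the sandwiched chain rule of Lemma~\ref{lem: chain rule} over the $n$ steps exactly as you describe, then convert each $D^{\inf,\reg}_{\Sand,\alpha}$ to $D^{\inf,\reg}$ by blocking into $m$-fold tensor powers and optimizing $1-\alpha$ and $m$, which is precisely how the paper lands on the $O(n^{2/3}\log n)$ correction (its choice is $1-\alpha = \Theta(\log m/m^2)$ and $m = \Theta((n/\log\frac{1}{\ve})^{1/3})$). The one place where you diverge is the step you yourself flag as the hard part, and it is worth being precise about how the paper closes it: rather than a direct second-order Taylor expansion of $D_{\Sand,\alpha}$ near $\alpha=1$, the paper first converts $D^{\inf,\reg}_{\Sand,\alpha}$ into $D^{\inf,\reg}_{\Meas,\alpha}$ (equality via~\cite[Lemma 28]{fang2024generalized}), uses superadditivity of $D_{\Meas,\alpha}$ to drop to a single $m$-fold block, and then invokes~\cite[Lemma 30]{fang2024generalized}, which delivers in one shot both your estimate (a) (with per-use penalty $(1-\alpha)(2+C)^2 m$, matching your $c_1(1-\alpha)k^2$ after normalization) and your estimate (b) (the $\log(m+d)/m$ rate of regularization). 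The price is exactly the uniform second-order control you anticipate: the theorem as proved in the Supplemental Materials carries an explicit hypothesis, assumption~\eqref{eq: assumptionC N M}, requiring $D_{\Petz,3/2}(\rho^{(\alpha)}_m\|\sigma^{(\alpha)}_m)\leq \frac{C}{4}m$ for the optimizers of the blocked divergence together with a bound on $\log\tr(\cM_i(\sigma))$, and the constant in $O(n^{2/3}\log n)$ depends on this $C$. So your proposal is correct in structure, but to complete it you would either need to prove your Taylor bound (a) with a constant uniform over the $\inf$-optimizing states — which is not automatic for arbitrary CP maps $\cM_i$ and is the content of that assumption — or adopt the paper's measured-divergence route and state the moment condition explicitly (the paper verifies it concretely, via Lemma~\ref{lem: assumptionC for cond entr}, only in the entropy-accumulation corollary where $\cM_i$ are replacer-type maps).
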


The proof makes important use of the chain rules in Lemma~\ref{lem: chain rule}. Moreover, choosing the channels $\cM_i$ to be replacer channels, we recover a slightly weaker version of the $H_{\max}^{\ve}$ entropy accumulation statement previously mentioned. We leave it as an open question for future work whether this new proof technique can lead to better entropy accumulation theorems.

\smallskip

\paragraph{Discussion.---} {We introduced the adversarial quantum channel discrimination and established its quantum Stein's lemma. This extreme framework complements and completes existing studies on the best-case setting, sets the boundary for the ultimate limits of channel discrimination, and establishes a new connnection between quantum information theory and quantum cryptography. Notably, the optimal error exponent can be efficiently computed despite its regularization, and the strong converse property holds in general. These features are unique and unprecedented in quantum information theory. Technically, we introduced the minimum output channel divergence and established its chain rules. This tool has recently been used in~\cite{arqand2025marginal} for the security analysis of quantum cryptographic protocols. Given the fundamental importance of quantum channel discrimination and our new connection to quantum cryptography, we anticipate that the new framework and tools developed will open new directions for future investigations.}

\smallskip

\begin{acknowledgements}
\paragraph{ Acknowledgments.} K.F. is supported by the National Natural Science Foundation of China (Grant No. 92470113 and 12404569), the Shenzhen Science and Technology Program (Grant No. JCYJ20240813113519025), the Shenzhen Fundamental Research Program (Grant No. JCYJ20241202124023031), the General R\&D Projects of 1+1+1 CUHK-CUHK(SZ)-GDST Joint Collaboration Fund (Grant No. GRDP2025-022), and the University Development Fund (Grant No. UDF01003565). O.F. acknowledges support by the European Research Council (ERC Grant AlgoQIP, Agreement No. 851716), by the European Union’s Horizon research and innovation programme under the project VERIqTAS (Grant Agreement No 101017733) and the project ``Quantum Security Networks Partnership" QSNP (Grant Agreement No 101114043), and by the Agence Nationale de la Recherche under the Plan France 2030 with the reference ANR-22-PETQ-0009.
\end{acknowledgements}

\nocite{umegaki1962conditional,muller2013quantum,wilde2014strong,datta2009min,renner2005security,petz1986quasi,donald1986relative,hiai1991proper,Berta2017,fang2024generalized,watrous2018theory,WW2019,fang2020chain,cover1999elements,arqand2025marginal,wilde2020amortized,leditzky2018approaches,qi2018applications,Tomamichel2015b,fang2025efficient,dupuis2020entropy,metger2022generalised,tomamichel2014relating,dupuis2019entropySecondOrder,carlen2010trace,dupuis2014generalized,watrous2009semidefinite,arnon2018practical,liu2018device,george2022finite}

\bibliographystyle{apsrev4-1}
\bibliography{Bib}

\clearpage

\onecolumngrid
\begin{center}
\vspace*{.5\baselineskip}
{\textbf{\large Supplemental Materials}}\\[1pt] \quad \\
\end{center}

\vspace{2cm}
In this Supplemental Material, we provide more detailed expositions, proofs and discussions of the results in the main text. We may reiterate some of the steps to ensure that the Supplemental Material are explicit and self-contained.

{
\hypersetup{linkcolor=black}
\tableofcontents
}

\section{Preliminaries}

\subsection{Notations and quantum divergences}

Let $\density(A)$ denote the set of all density operators on a finite-dimensional Hilbert space $\cH_A$. Let $\mathscr{L}(A)$ represent the set of all linear operators. Let $\HERM(A)$, $\PSD(A)$ and $\PD(A)$ be the set of all hermitian operators, positive semidefinite operators and positive definite operators on $\cH_A$, respectively. The set of completely positive maps from $\mathscr{L}(A)$ to $\mathscr{L}(B)$ is denoted by $\CP(A\!:\!B)$. A quantum channel $\cN_{A\to B}$ is a linear map from $\mathscr{L}(A)$ to $\mathscr{L}(B)$ that is both completely positive and trace-preserving. The set of all such maps is denoted by $\CPTP(A\!:\!B)$.  A quantum divergence is a functional $\DD:\density \times \PSD \to \RR$ that satisfies the data-processing inequality, characterizing the ``distinguishability'' or ``distance'' between two quantum states. There are a few quantum divergences used throughout this work.

\begin{definition}(Umegaki relative entropy~\cite{umegaki1962conditional}.)
For any $\rho\in \density$ and $\sigma \in \PSD$, the Umegaki relative entropy is defined by
\begin{align}\label{eq: Umegaki}
    D(\rho\|\sigma):= \tr[\rho(\log \rho - \log \sigma)],
\end{align}
if $\supp(\rho) \subseteq \supp(\sigma)$ and $+\infty$ otherwise.
\end{definition}

\begin{definition}(Sandwiched \Renyi divergence~\cite{muller2013quantum,wilde2014strong}.)
Let $\alpha \in (0,1) \cup (1,+\infty)$. For any $\rho\in \density$ and $\sigma \in \PSD$, the sandwiched \Renyi divergence is defined by
\begin{align}\label{eq: Sandwiched}
    D_{\Sand,\alpha}(\rho\|\sigma) := \frac{1}{\alpha-1}\log\tr\left[\sigma^{\frac{1-\alpha}{2\alpha}}\rho\sigma^{\frac{1-\alpha}{2\alpha}}\right]^\alpha,
\end{align}
if $\supp(\rho) \subseteq \supp(\sigma)$, and $+\infty$ otherwise.
\end{definition}

When $\alpha \to 1$, $D_{\Sand,\alpha}$ converge to the Umegaki relative entropy~\cite{muller2013quantum,wilde2014strong},
\begin{align}\label{eq: state Renyi continuous}
\lim_{\alpha \to 1} D_{\Sand,\alpha}(\rho\|\sigma) = D(\rho\|\sigma).
\end{align}
When $\alpha \to \infty$, the sandwiched \Renyi divergence converges to the max-relative entropy~\cite{datta2009min,renner2005security},
\begin{align}\label{eq: definition of Dmax}
\lim_{\alpha \to \infty} D_{\Sand,\alpha}(\rho\|\sigma) = D_{\max}(\rho\|\sigma):= \log\inf\big\{t \in \RR \;:\; \rho \leq t\sigma \big\}\;,
\end{align}
if $\supp(\rho) \subseteq \supp(\sigma)$ and $+\infty$ otherwise. 

\begin{definition}(Petz \Renyi divergence~\cite{petz1986quasi}.)
Let $\alpha \in (0,1) \cup (1,+\infty)$. For any $\rho\in \density$ and $\sigma \in \PSD$, the Petz \Renyi divergence is defined by
\begin{align}\label{eq: Petz}
    D_{\Petz,\alpha}(\rho\|\sigma) := \frac{1}{\alpha-1}\log\tr\left[\rho^\alpha\sigma^{1-\alpha}\right],
\end{align}
if  $\supp(\rho) \subseteq \supp(\sigma)$, and $+\infty$ otherwise. 
\end{definition}

\begin{definition}(Hypothesis testing relative entropy.)
Let $\ve \in [0,1]$. For any $\rho\in \density$ and $\sigma \in \PSD$, the quantum hypothesis testing relative entropy is defined by $D_{\Hypo, \ve}(\rho\|\sigma) := -\log \beta_{\ve}(\rho\|\sigma)$
where
\begin{align}
    \beta_\ve(\rho\|\sigma): = \min_{0\leq M \leq I} \left\{\tr[\sigma M]: \tr[\rho(I-M)] \leq \ve\right\}.
\end{align}
\end{definition}

\begin{definition}(Measured relative entropy~\cite{donald1986relative,hiai1991proper}.)
For any $\rho \in \density$, $\sigma \in \PSD$, the measured relative entropy is defined by
\begin{align}
D_{\Meas} (\rho\|\sigma) := \sup_{(\cX,M)} D(P_{\rho,M}\|P_{\sigma,M}),
\end{align}
where $D$ is the Kullback–Leibler divergence and the optimization is over finite sets $\cX$ and positive operator valued measures $M$ on $\cX$ such that $M_x \geq 0$ and $\sum_{x \in \cX} M_x = I$, $P_{\rho,M}$ is a measure on $\cX$ defined via the relation $P_{\rho,M}(x) = \tr[M_x\rho]$ for any $x \in \cX$.    
\end{definition}

\begin{definition}(Measured \Renyi divergence~\cite{Berta2017}.)
Let $\alpha \in (0,1) \cup (1,\infty)$.  For any $\rho \in \density$ and $\sigma \in \PSD$, the {measured \Renyi divergence} is defined as
\begin{align}\label{eq: definition DM alpha}
D_{\Meas, \alpha} (\rho\|\sigma) := \sup_{(\cX,M)} D_{\alpha}(P_{\rho,M}\|P_{\sigma,M}),
\end{align}
where $D_{\alpha}$ is the classical \Renyi divergence. 
\end{definition}
When $\alpha \to 1$, the measured \Renyi divergence converges to the measured relative entropy. 

\subsection{Minimum output channel divergence}

The quantum divergence between two quantum states can be naturally extended to two sets of quantum states. 

\begin{definition}(Quantum divergence between two sets of states.)
For any subsets $\sA \subseteq \density$ and $\sB \subseteq \PSD$, their divergence is defined as 
\begin{align}
\DD(\sA\|\sB) := \inf_{\rho \in \sA, \sigma \in \sB} \DD(\rho\|\sigma),
\end{align} 
which represents the minimum ``distance'' between the sets.
\end{definition}

The quantum divergence can also be extended to quantum channels.

\begin{definition}(Minimum output channel divergence.)
    Let $\cN \in \CPTP(A\!:\!B)$ and $\cM \in \CP(A\!:\!B)$. The {minimum output channel divergence} is defined by
\begin{align}
\DD^{\inf}(\cN\|\cM) := \inf_{\rho, \sigma \in \density(A)} \DD(\cN(\rho)\| \cM(\sigma)),
\end{align}
which conceptually captures the worst-case scenario where different test states are chosen to minimize the distinguishability between the given channels.
\end{definition}
 It is also useful to see this as a divergence between two sets of quantum states, expressed as 
\begin{align}
    \DD^{\inf}(\cN\|\cM) = \DD(\cN(\density)\|\cM(\density)),
\end{align}
where $\cL(\density) := \{\cL(\rho): \rho \in \density\}$ denotes the image set of $\density$ under the linear map $\cL$. We also define the regularized channel divergence as
\begin{align}
\DD^{\inf,\reg}(\cN\|\cM) := \lim_{n \to \infty} \frac{1}{n} \DD^{\inf}(\cN^{\ox n}\|\cM^{\ox n}),
\end{align}
which accounts for the asymptotic behavior of the channel divergence over multiple uses.

\begin{remark}
    As proved in Section~\ref{sec: Proof of the adversarial quantum Stein's lemma} that the image sets satisfy all assumptions in~\cite[Assumption 24]{fang2024generalized}.
    Note that the measured \Renyi divergence coincides with the sandwiched \Renyi divergence at $\alpha \in \{1/2, +\infty\}$. Thus, by applying the superadditivity in~\cite[Lemma 21]{fang2024generalized} and the subadditivity in~\cite[Lemma 26]{fang2024generalized}, we conclude that the minimum output channel divergence is additive for these cases, that is,
    \begin{align}
        D^{\inf}_{\Sand,1/2}(\cN_1\ox \cN_2\|\cM_1\ox \cM_2) & = D^{\inf}_{\Sand,1/2}(\cN_1\|\cM_1) + D^{\inf}_{\Sand,1/2}(\cN_2\|\cM_2)\\
        D_{\max}^{\inf}(\cN_1\ox \cN_2\|\cM_1\ox \cM_2) & = D_{\max}^{\inf}(\cN_1\|\cM_1) + D_{\max}^{\inf}(\cN_2\|\cM_2),
    \end{align}
    where the first equality recovers Watrous' result~\cite[Corollary 3.60]{watrous2018theory}. 
\end{remark}

\section{Chain rules}
\label{sec: Tightness of the chain rules}


{In the ``best-case'' channel discrimination, adaptive and non-adaptive strategies were studied independently and shown to be characterized by the amortized channel divergence and the regularized channel divergence, respectively~\cite{WW2019}. The equivalence of these two settings was fully resolved by some of us~\cite{fang2020chain}, where the chain rule property of quantum relative entropy (in the upper bound direction) was established. In this work, we directly demonstrate the collapse of adaptive and non-adaptive strategies in the adversarial setting, in a manner resembling the chain rule of~\cite{fang2020chain}, but this time in the lower bound direction. This is the first chain rule for quantum relative entropy in this direction and complements the previous result, together providing a complete analogue of the classical chain rule of relative entropy~\cite[Theorem 2.5.3]{cover1999elements}. Proving this new chain rule in the lower bound direction requires completely different techniques: specifically, we establish it using properties of measured relative entropy, whereas the chain rule in~\cite{fang2020chain} relied on smoothed max-relative entropy.}

\subsection{Proof of the chain rules}

The proof of these chain rules requires the notion of (reverse) polar sets. 

\begin{definition}
Let $\cvxset \subseteq \HERM$ be a convex set. Its polar set is defined by
\begin{align}
\cvxset^\circ:= \{X: \tr[XY] \leq 1,  \forall\, Y\in \cvxset\} = \{X : h_{\cvxset}(X) \leq 1\}
\end{align} 
where $h_{\cvxset}$ is the support function of $\cvxset$:
\begin{align}
    h_{\cvxset}(\omega) := \sup_{\sigma \in \cvxset} \tr[\omega \sigma].
\end{align}
Let $\polarPSD{\cvxset}:= \cvxset^\circ \cap \PSD$ and  $\polarPD{\cvxset}:= \cvxset^\circ \cap \PD$ be the intersections with positive semidefinite operators and positive definite operators, respectively.

Similarly, the reverse polar set of $\cvxset$ is defined as:
\begin{align}
    \cvxset^{\star} &:= \{X : \tr[XY] \geq 1, \forall\, Y \in \cvxset \} = \{X : \revh_{\cvxset}(X) \geq 1\}
\end{align}
where $\revh_{\cvxset}$ is the reverse support function of $\cvxset$:
\begin{align}
    \revh_{\cvxset}(\omega) := \inf_{\sigma \in \cvxset} \tr[\omega \sigma].
\end{align}
\end{definition}

\begin{lemma}(Chain rules.)\label{lem: chain rule DM}
Let $\rho \in \density(RA)$, $\sigma \in \PSD(RA)$, $\cN \in \CPTP(A\!:\!B)$ and $\cM \in \CP(A\!:\!B)$. Let $\rho^\cN_{RB}=\cN(\rho_{RA})$ and $\sigma^\cM_{RB} = \cM (\sigma_{RA})$ be the channel output states.  Then it holds, for any $\alpha \in (0,+\infty)$, that
\begin{align}\label{eq: chain rule DM}
D_{\Meas,\alpha}(\rho^\cN_{RB}\|\sigma^\cM_{RB}) \geq D_{\Meas,\alpha}(\rho_R\|\sigma_R) + D_{\Meas,\alpha}^{\inf}(\cN\|\cM).
\end{align} 
Moreover, it holds, for any $\alpha \in [1/2,\infty)$, that
\begin{align}\label{eq: chain rule DS}
D_{\Sand,\alpha}(\rho^\cN_{RB}\|\sigma^\cM_{RB}) \geq D_{\Sand,\alpha}(\rho_R\|\sigma_R) + D_{\Sand,\alpha}^{\inf,\reg}(\cN\|\cM).
\end{align}
When $\alpha=1$, the chain rules represent the results for the measured and Umegaki relative entropies, respectively.
\end{lemma}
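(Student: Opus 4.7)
The plan is to first establish the measured Rényi chain rule directly, and then bootstrap the sandwiched case from it through a tensorization-plus-asymptotic-equipartition argument.

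For the measured Rényi chain rule, I would use a variational characterization of $D_{\Meas,\alpha}(\sA\|\sB)$ as a supremum over a positive operator, which is closely tied to the polar-set formulation introduced above and is available from the companion paper~\cite{fang2024generalized}. Schematically, for $\alpha>1$ it takes the form
\begin{align*}
D_{\Meas,\alpha}(\sA\|\sB) = \sup_{\omega>0}\Big[\tfrac{\alpha}{\alpha-1}\log\revh_{\sA}(\omega) - \log h_{\sB}(\omega^{\alpha/(\alpha-1)})\Big],
\end{align*}
with an analogous formula for $\alpha\in(0,1)$. For the channel divergence with $\sA=\cN(\density)$ and $\sB=\cM(\density)$, the (reverse) support functions collapse to $\revh_{\sA}(\omega_B)=\lambda_{\min}(\cN^\dagger(\omega_B))$ and $h_{\sB}(\omega_B)=\lambda_{\max}(\cM^\dagger(\omega_B))$. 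Given near-optimal witnesses $\omega_R$ for $D_{\Meas,\alpha}(\rho_R\|\sigma_R)$ and $\omega_B$ for $D^{\inf}_{\Meas,\alpha}(\cN\|\cM)$, I plug the tensor-product witness $\omega_R\ox\omega_B$ into the variational formula for $D_{\Meas,\alpha}(\rho^\cN_{RB}\|\sigma^\cM_{RB})$. Because $\cN$ acts on $A$ only, one has $\tr[(\omega_R\ox\omega_B)\rho^\cN_{RB}] = \tr[(\omega_R\ox\cN^\dagger(\omega_B))\rho_{RA}]$, and the operator inequality $\cN^\dagger(\omega_B)\geq \lambda_{\min}(\cN^\dagger(\omega_B))\,I_A$ combined with $\omega_R,\rho_{RA}\geq 0$ yields $\tr[(\omega_R\ox\cN^\dagger(\omega_B))\rho_{RA}]\geq \lambda_{\min}(\cN^\dagger(\omega_B))\,\tr[\omega_R\rho_R]$. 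A symmetric $\lambda_{\max}$ upper bound on the $\sigma$-term decouples the $R$-contribution from the channel contribution, and optimizing over $\omega_R,\omega_B$ delivers the measured chain rule. The case $\alpha\in(0,1)$ is handled analogously with the appropriate sign conventions.

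For the sandwiched chain rule, I bootstrap from the measured case. Applying the measured inequality to $\rho_{RA}^{\ox n},\sigma_{RA}^{\ox n}$ and to the product channels $\cN^{\ox n},\cM^{\ox n}$ yields
\begin{align*}
D_{\Meas,\alpha}\big((\rho^\cN_{RB})^{\ox n}\big\|(\sigma^\cM_{RB})^{\ox n}\big) \geq D_{\Meas,\alpha}(\rho_R^{\ox n}\|\sigma_R^{\ox n}) + D^{\inf}_{\Meas,\alpha}(\cN^{\ox n}\|\cM^{\ox n}).
\end{align*}
Dividing by $n$ and letting $n\to\infty$, the quantum asymptotic equipartition property for states (Hiai--Petz at $\alpha=1$, Mosonyi--Ogawa for $\alpha\in[1/2,\infty)\setminus\{1\}$) gives $\tfrac{1}{n}D_{\Meas,\alpha}(\tau^{\ox n}\|\eta^{\ox n})\to D_{\Sand,\alpha}(\tau\|\eta)$, applied both to the joint term and to the $R$-marginal. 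The channel term satisfies $\tfrac{1}{n}D^{\inf}_{\Meas,\alpha}(\cN^{\ox n}\|\cM^{\ox n}) \to D^{\inf,\reg}_{\Sand,\alpha}(\cN\|\cM)$ by the generalized AEP of~\cite{fang2024generalized} applied to the image sets $\cN(\density),\cM(\density)$, which are convex, compact, permutation invariant, and tensor-stable, as already verified in the achievability proof of Theorem~\ref{thm: adversarial quantum channel Steins lemma}.

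The main obstacle is the measured-to-sandwiched transfer: because $D^{\inf}_{\Sand,\alpha}$ fails to be additive in general, the regularization on the right-hand side can only emerge through the limit above, which simultaneously requires the state AEP and the generalized AEP for image sets from~\cite{fang2024generalized}. Modulo that input, the chain rule reduces to the tensor-product witness construction sketched for the measured case, and the $\alpha=1$ specialization follows by continuity ($D_{\Sand,1}=D$, $D_{\Meas,1}=D_{\Meas}$).
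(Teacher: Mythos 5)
Your proposal is correct and takes essentially the same route as the paper: your product-witness computation with $\lambda_{\min}(\cN^\dagger(\omega_B))$ and $\lambda_{\max}(\cM^\dagger(\omega_B))$ is precisely the content of the paper's verification that $\polarPSD{(\sA_{1})} \ox \polarPSD{(\sA_{2})} \subseteq \polarPSD{(\sA_{3})}$ (and its reverse-polar analogue for $\alpha>1$), after which both arguments invoke the set-divergence superadditivity of~\cite{fang2024generalized}. The sandwiched case is then handled identically in both proofs, by applying the measured chain rule to $n$-fold tensor powers and using $\tfrac{1}{n}D_{\Meas,\alpha}(\tau^{\ox n}\|\eta^{\ox n})\to D_{\Sand,\alpha}(\tau\|\eta)$ together with the identification of the regularized measured and sandwiched minimum output channel divergences.
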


\begin{proof}
The proof utilizes the superadditivity of the divergence between two sets of quantum states, as established in~\cite[Lemma 21, 23]{fang2024generalized}. To this end, we consider the following sets:
\begin{alignat}{3}
 \sA_1 & = \{\rho_R\}, \quad \sA_2 &= \cN(\density), \quad \sA_3 & = \{\rho_{RB}^{\cN}\},\\
 \sB_1 & = \{\sigma_R\}, \quad \sB_2 & = \cM(\density), \quad \sB_3 & = \{\sigma_{RB}^\cM\}.
\end{alignat}
{We apply semidefinite programming duality to} verify that they meet the required assumptions. We do this for $\{\sA_1, \sA_2, \sA_3\}$ and the same argument works for $\{\sB_1, \sB_2, \sB_3\}$. 
For any $Y_B \in \polarPSD{(\sA_{2})}$, we have $\tr[Y_B \cN(\rho)] \leq 1$ for any $\rho \in \density(A)$. This implies that $\cN^\dagger(Y_B) \leq I_A$, with $\cN^\dagger$ being the adjoint map of $\cN$. Therefore, for any $X_R \in \polarPSD{(\sA_{1})}$ and $Y_B \in \polarPSD{(\sA_{2})}$, we have the following equations,
\begin{align}
    \tr[(X_R \ox Y_B) \cN_{A\to B}(\rho_{RA})] & = \tr[(X_R \ox \cN^\dagger(Y_B)) \rho_{RA}]\\
    & \leq \tr[(X_R \ox I_A) (\rho_{RA})]\\
    & = \tr[X_R \rho_R]\\
    & \leq 1.
\end{align}
This implies that $X_R \ox Y_B \in \polarPSD{(\sA_3)}$ and therefore $\polarPSD{(\sA_{1})} \ox \polarPSD{(\sA_{2})} \subseteq \polarPSD{(\sA_3)}$. Applying the superadditivity in ~\cite[Lemma 21, 23]{fang2024generalized}, we have the asserted result in Eq.~\eqref{eq: chain rule DM} for $\alpha \in (0,1]$. The proof for $\alpha \in (1,+\infty)$ follows in the same way. For $\alpha \in (1,\infty)$, we work with the reverse polar sets instead: if $Y_B \in (\sA_2)^{\star}_{\plus}$, then $\cN^{\dagger}(Y_B) \geq I_{A}$ and as a result, if $X_R \in (\sA_1)^{\star}_{\plus}$, we have $X_R \otimes Y_B \in (\sA_3)^{\star}_{\plus}$ and we can similarly apply~\cite[Lemma 21]{fang2024generalized}. The result in Eq.~\eqref {eq: chain rule DS} is a direct consequence of Eq.~\eqref{eq: chain rule DM}. More specifically, we have that for any $\alpha \in [1/2,\infty)$,
\begin{align}
    D_{\Sand,\alpha}(\rho^\cN_{RB}\|\sigma^\cM_{RB}) & = \lim_{n\to \infty} \frac{1}{n} D_{\Meas,\alpha}((\rho^\cN_{RB})^{\ox n}\|(\sigma^\cM_{RB})^{\ox n})\\
    & \geq \lim_{n\to \infty} \frac{1}{n} D_{\Meas,\alpha}((\rho_R)^{\ox n}\|(\sigma_{B})^{\ox n}) + \lim_{n\to \infty} \frac{1}{n} D_{\Meas,\alpha}^{\inf}(\cN^{\ox n}\|\cM^{\ox n})\\
    & = D_{\Sand,\alpha}(\rho_R\|\sigma_R) + D_{\Sand,\alpha}^{\inf,\reg}(\cN\|\cM),
\end{align}
where the first line follows from~\cite[Lemma 16, 17]{fang2024generalized}, the second line follows from Eq.~\eqref{eq: chain rule DM} and the last line follows from~\cite[Lemma 28]{fang2024generalized}.
\end{proof}

\bigskip
{These chain rules can be viewed as a strengthening of the data processing inequality under partial trace. It can also be seen as an operational reformulation of the superadditivity of the measured relative entropy from~\cite[Lemma 21]{fang2024generalized}. The importance of this operational perspective is evidenced by the recent application of our chain rule in the security analysis of quantum cryptography by an independent group~\cite{arqand2025marginal}, and we anticipate that the chain rule properties developed here will enable further applications in the future.}

\subsection{Tightness of the chain rules}

In the following, we introduce the notion of the amortized minimum output channel divergence and show that it coincides with the regularized divergence,  being an analog result for the best-case channel divergence~\cite{fang2020chain}. This, in turn, demonstrates the tightness of our chain rule properties.

Similar to the amortized channel divergence used in the existing literature~\cite{wilde2020amortized}, we can define the minimum output version as follows.
Let $\DD$ be a quantum divergence between states. Let $\cN \in \CPTP(A\!:\!B)$ and $\cM \in \CP(A\!:\!B)$. Then the amortized minimum output channel divergence is defined by
\begin{align}
    \DD^{\inf, \amo}(\cN\|\cM):= \inf_{\substack{\rho\in \density(RA)\\ \sigma \in \density(RA)}} \DD(\cN_{A\to B}(\rho_{RA})\|\cM_{A\to B}(\sigma_{RA})) - \DD(\rho_R\|\sigma_R).
\end{align}

\begin{lemma}
Let $\alpha \in [1/2,\infty)$. For any $\cN\in \CPTP(A\!:\!B)$, $\cM \in \CP(A\!:\!B)$, it holds that
\begin{align}
D_{\Sand,\alpha}^{\inf, \amo}(\cN\|\cM) & = D_{\Sand,\alpha}^{\inf,\reg}(\cN\|\cM).
\end{align}     
Equivalently, for any $\cN\in \CPTP(A\!:\!B)$, $\cM \in \CP(A\!:\!B)$ and any $\ve \in (0,1)$, there exists $\rho,\sigma\in \density(RA)$, such that \begin{align}
    D_{\Sand,\alpha}^{\inf,\infty}(\cN\|\cM) \leq D_{\Sand,\alpha}(\cN_{A\to B}(\rho_{RA})\|\cM_{A\to B}(\sigma_{RA})) - D_{\Sand,\alpha}(\rho_R\|\sigma_R)  \leq D_{\Sand,\alpha}^{\inf,\infty}(\cN\|\cM) + \ve.
\end{align}
\end{lemma}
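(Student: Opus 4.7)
The plan is to prove the two inequalities $D_{\Sand,\alpha}^{\inf,\amo}(\cN\|\cM) \geq D_{\Sand,\alpha}^{\inf,\reg}(\cN\|\cM)$ and $D_{\Sand,\alpha}^{\inf,\amo}(\cN\|\cM) \leq D_{\Sand,\alpha}^{\inf,\reg}(\cN\|\cM)$ separately. The first is immediate from the sandwiched chain rule~\eqref{eq: chain rule DS}: for every $\rho,\sigma\in\density(RA)$ we have $D_{\Sand,\alpha}(\cN(\rho_{RA})\|\cM(\sigma_{RA})) - D_{\Sand,\alpha}(\rho_R\|\sigma_R) \geq D_{\Sand,\alpha}^{\inf,\reg}(\cN\|\cM)$, and taking the infimum on the left gives the claim. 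This also proves the left inequality of the equivalent formulation.

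For the reverse direction, the strategy is a telescoping plus pigeonhole argument. Assume $D_{\Sand,\alpha}^{\inf,\reg}(\cN\|\cM) < \infty$ (otherwise the statement is vacuous) and fix $\ve\in(0,1)$. Plugging product states into the definition shows the subadditivity $D_{\Sand,\alpha}^{\inf}(\cN^{\ox(m+n)}\|\cM^{\ox(m+n)}) \leq D_{\Sand,\alpha}^{\inf}(\cN^{\ox m}\|\cM^{\ox m}) + D_{\Sand,\alpha}^{\inf}(\cN^{\ox n}\|\cM^{\ox n})$, so by Fekete's lemma the regularized value is $\inf_n \tfrac{1}{n} D_{\Sand,\alpha}^{\inf}(\cN^{\ox n}\|\cM^{\ox n})$. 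Pick $n$ and near-optimal states $\rho_n,\sigma_n\in\density(A^{\ox n})$ such that $D_{\Sand,\alpha}(\cN^{\ox n}(\rho_n)\|\cM^{\ox n}(\sigma_n)) \leq n D_{\Sand,\alpha}^{\inf,\reg}(\cN\|\cM) + n\ve/2$.

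Next, unroll the $n$ channel uses into a telescope. For $k=0,\ldots,n$ set $\rho^{(k)}_{B_1\cdots B_k} := \cN^{\ox k}(\tr_{A_{k+1}\cdots A_n}\rho_n)$ together with the pre-$k$-th-channel ``reference + input'' states $\tilde\rho^{(k)}_{R_k A_k} := (\cN^{\ox(k-1)}\ox \id_{A_k})(\tr_{A_{k+1}\cdots A_n}\rho_n)$ where $R_k := B_1\cdots B_{k-1}$, and define $\sigma^{(k)}, \tilde\sigma^{(k)}$ analogously. These satisfy $\cN_{A_k\to B_k}(\tilde\rho^{(k)}) = \rho^{(k)}$ and $\tr_{A_k}\tilde\rho^{(k)} = \rho^{(k-1)}$ (and likewise for $\sigma$). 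The chain rule~\eqref{eq: chain rule DS} applied to the pair $\tilde\rho^{(k)},\tilde\sigma^{(k)}$ then yields the increment bound $a_k := D_{\Sand,\alpha}(\rho^{(k)}\|\sigma^{(k)}) - D_{\Sand,\alpha}(\rho^{(k-1)}\|\sigma^{(k-1)}) \geq D_{\Sand,\alpha}^{\inf,\reg}(\cN\|\cM)$, while telescoping gives $\sum_{k=1}^n a_k = D_{\Sand,\alpha}(\rho^{(n)}\|\sigma^{(n)}) \leq n D_{\Sand,\alpha}^{\inf,\reg}(\cN\|\cM) + n\ve/2$.

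A pigeonhole step then closes the argument: the numbers $\{a_k - D_{\Sand,\alpha}^{\inf,\reg}(\cN\|\cM)\}_{k=1}^n$ are non-negative and sum to at most $n\ve/2$, so some index $k^*$ satisfies $a_{k^*} \leq D_{\Sand,\alpha}^{\inf,\reg}(\cN\|\cM) + \ve/2 < D_{\Sand,\alpha}^{\inf,\reg}(\cN\|\cM) + \ve$. Choosing $\rho_{RA} := \tilde\rho^{(k^*)}_{R_{k^*}A_{k^*}}$ and $\sigma_{RA} := \tilde\sigma^{(k^*)}_{R_{k^*}A_{k^*}}$ gives the witness required by the equivalent formulation, and hence $D_{\Sand,\alpha}^{\inf,\amo}(\cN\|\cM) \leq D_{\Sand,\alpha}^{\inf,\reg}(\cN\|\cM)$. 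The main obstacle is the pigeonhole matching: it needs the per-step lower bound (supplied by the chain rule) and a sharp enough global upper bound (supplied by the near-optimality of $\rho_n,\sigma_n$) to agree up to $O(\ve)$ on average, which is precisely what passing to the regularized quantity guarantees. A minor technicality is to ensure the support conditions along the chain so all sandwiched Rényi quantities stay finite; this can be arranged by replacing $\sigma_n$ with $(1-\eta)\sigma_n + \eta \pi$ for a full-rank $\pi$ and $\eta\downarrow 0$, which perturbs the bounds by an arbitrarily small amount.
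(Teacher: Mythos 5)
Your proof is correct, and your reverse direction is at heart the same telescoping decomposition that the paper uses, just packaged differently. The paper first establishes superadditivity of the amortized quantity, $D^{\inf,\amo}_{\Sand,\alpha}(\cN_1\ox\cN_2\|\cM_1\ox\cM_2)\geq D^{\inf,\amo}_{\Sand,\alpha}(\cN_1\|\cM_1)+D^{\inf,\amo}_{\Sand,\alpha}(\cN_2\|\cM_2)$, by exactly your peeling step (the pair $(\rho_{RA_1},\sigma_{RA_1})$ feeds the first channel, the pair $(\cN_1(\rho),\cM_1(\sigma))$ feeds the second, and the reference terms cancel); combined with the trivial bound $D^{\inf,\amo}_{\Sand,\alpha}\leq D^{\inf}_{\Sand,\alpha}$ from dropping the reference system, this gives $D^{\inf,\amo}_{\Sand,\alpha}(\cN\|\cM)\leq\frac{1}{n}D^{\inf}_{\Sand,\alpha}(\cN^{\ox n}\|\cM^{\ox n})$ for every $n$ and hence the claim in the limit. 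You instead fix a near-optimal $n$-copy input, telescope its objective into increments $a_k$, and pigeonhole; the two are logically interchangeable, since your per-step lower bound $a_k\geq D^{\inf,\reg}_{\Sand,\alpha}$ via the chain rule could be replaced by the tautological $a_k\geq D^{\inf,\amo}_{\Sand,\alpha}$, which is precisely what the paper's superadditivity amounts to. A small advantage of your version is that it directly produces the explicit witness pair demanded by the ``equivalently'' clause. Two minor points. First, your witness $\tilde\sigma^{(k^*)}$ need not be normalized because $\cM$ is only CP; this is harmless since $D_{\Sand,\alpha}(\rho\|c\sigma)=D_{\Sand,\alpha}(\rho\|\sigma)-\log c$ makes the objective difference scale-invariant in $\sigma$ (the paper's own intermediate states $\gamma=\cM_1(\sigma)$ have the same feature), but it deserves a sentence. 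Second, your perturbation of $\sigma_n$ is unnecessary: if some intermediate $D_{\Sand,\alpha}(\rho^{(k)}\|\sigma^{(k)})$ were $+\infty$, the chain rule (with $D^{\inf,\reg}_{\Sand,\alpha}>-\infty$) would propagate $+\infty$ up to $D_{\Sand,\alpha}(\rho^{(n)}\|\sigma^{(n)})$, contradicting the near-optimality of $\rho_n,\sigma_n$ whenever $D^{\inf,\reg}_{\Sand,\alpha}(\cN\|\cM)<\infty$.
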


\begin{proof}
We prove the result for the quantum relative entropy ($\alpha = 1$) and the same argument works for the sandwiched \Renyi divergence of order $\alpha \neq 1$ as well. Note that the chain rule property in Lemma~\ref{lem: chain rule DM} is equivalent to $D^{\inf, \amo}(\cN\|\cM) \allowbreak \geq D^{\inf,\reg}(\cN\|\cM)$.
Now we prove the reverse direction. For this, we will first show the superadditivity of the amortized divergence under tensor product. That is, 
\begin{align}\label{eq: amortized worst case superadditivity}
D^{\inf, \amo}(\cN_1 \ox \cN_2 \|\cM_1\ox \cM_2) \geq D^{\inf, \amo}(\cN_1\|\cM_1) + D^{\inf, \amo}(\cN_2\|\cM_2),
\end{align}
for any quantum channels $\cN_1 \in \CPTP(A_1\!:\!B_1)$, $\cN_2\in \CPTP(A_2\!:\!B_2)$, CP maps $\cM_1 \in \CP(A_1\!:\!B_1)$, $\cM_2\in \CP(A_2\!:\!B_2)$. 
To see this, let $(\rho_{RA_1A_2}, \sigma_{RA_1A_2})$ be any feasible solution to the optimization on the left-hand side of Eq.~\eqref{eq: amortized worst case superadditivity} and denote its corresponding objective value by
\begin{align}
    \delta_{12} := D(\cN_1 \ox \cN_2(\rho_{RA_1A_2})\|\cM_1 \ox \cM_2(\sigma_{RA_1A_2})) - D(\rho_R\|\sigma_R).
\end{align}
Let $\omega_{RA_2B_1} = \cN_1(\rho_{RA_1A_2})$ and $\gamma_{RA_2B_1} = \cM_1(\sigma_{RA_1A_2})$. We can check that $(\rho_{RA_1},\sigma_{RA_1})$ and $(\omega_{RA_2B_1},\allowbreak \gamma_{RA_2B_1})$ are feasible solutions to the amortized divergence on the right-hand side of Eq.~\eqref{eq: amortized worst case superadditivity}, respectively, with the corresponding objective values by
\begin{align}
    \delta_1 := &  D(\cN_1(\rho_{RA_1})\|\cM_1(\sigma_{RA_1})) - D(\rho_R\|\sigma_R) \geq D^{\inf, \amo}(\cN_1\|\cM_1)\\
    \delta_2 := & D(\cN_2(\omega_{RA_2B_1})\|\cM_2(\gamma_{RA_2B_1})) - D(\omega_{RB_1}\|\gamma_{RB_1}) \geq D^{\inf, \amo}(\cN_2\|\cM_2).
\end{align}
Noting that $\omega_{RB_1} = \cN_1(\rho_{RA_1})$ and $\gamma_{RB_1} = \cM_1(\sigma_{RA_1})$, we have $\delta_{12} = \delta_1 + \delta_2$. This implies
\begin{align}
    \delta_{12} \geq D^{\inf, \amo}(\cN_1\|\cM_1) + D^{\inf, \amo}(\cN_2\|\cM_2).
\end{align}
As this holds for any feasible solution $(\rho_{RA_1A_2}, \sigma_{RA_1A_2})$, we have the asserted result in Eq.~\eqref{eq: amortized worst case superadditivity}.

By trivializing the reference system in the amortized divergence, we get $D^{\inf, \amo}(\cN\|\cM) \leq D^{\inf}(\cN\|\cM)$. Then we have
\begin{align}
    D^{\inf, \amo}(\cN\|\cM) \leq \frac{1}{n} D^{\inf, \amo}(\cN^{\ox n}\|\cM^{\ox n}) \leq \frac{1}{n} D^{\inf}(\cN^{\ox n}\|\cM^{\ox n}),
\end{align}
where the first inequality follows from Eq.~\eqref{eq: amortized worst case superadditivity}. As the above holds for any $n$, we can take $n\to \infty$ on the right-hand side and conclude that $D^{\inf, \amo}(\cN\|\cM) \leq D^{\inf,\reg}(\cN\|\cM)$. This completes the proof.
\end{proof}

\subsection{Counter-example to a potential improvement of the chain rule}
\label{sec: Counter-example to a potential improvement of the chain rule}

The quantum channel divergence studied in most existing literatures use the same test states for both channels, e.g., $\sup_{\rho \in \density(RA)} D(\cN_{A\to B}(\rho_{RA})\|\cM_{A\to B}(\rho_{RA}))$~\cite{leditzky2018approaches,WW2019}. So it may be expected that we can enhance the chain rules by using the same test states as well. However, we show here that this is not possible by giving a counter-example. That is, the chain rule cannot be enhanced to 
\begin{align}\label{eq: chain rule enhancement}
D_{\Meas}(\cN_{A\to B}(\rho_{RA})\|\cM_{A\to B}(\sigma_{RA})) \geq D_{\Meas}(\rho_R\|\sigma_R) + D_{\Meas}^{\inf'}(\cN\|\cM)
\end{align} 
where the channel divergence takes the same input state
\begin{align}
D_{\Meas}^{\inf'}(\cN\|\cM):= \inf_{\rho\in \density(A)} D_{\Meas}(\cN_{A\to B}(\rho_{A})\| \cM_{A \to B}(\rho_{A})),
\end{align}
{which is structurally closer to
the best-case divergence.}
To see this, consider the generalized amplitude damping (GAD) channel, which is defined as
\begin{align}
    \cA_{\gamma, N}(\rho)  = \sum_{i=1}^4 A_i \rho A_i^\dagger, \quad \gamma, N \in [0,1],
\end{align}
with the Kraus operator
\begin{alignat}{2}
    A_1 & = \sqrt{1-N}(\ket{0}\bra{0} + \sqrt{1-\gamma}\ket{1}\bra{1}), \qquad && A_2  = \sqrt{\gamma(1-N)} \ket{0}\bra{1},\\  
    A_3 & = \sqrt{N}(\sqrt{1-\gamma} \ket{0}\bra{0} + \ket{1}\bra{1}), \qquad
    && A_4 =  \sqrt{\gamma N} \ket{1}\bra{0}.
\end{alignat}

Using convex optimization, we can numerically evaluate each terms $D_{\Meas}(\cN_{A\to B}(\rho_{RA})\|\cM_{A\to B}(\sigma_{RA}))$, $D_{\Meas}(\rho_R\|\sigma_R)$ and $D_{\Meas}^{\inf'}(\cN\|\cM)$. Then in Figure~\ref{fig: chain rule enhancement}(a), we show that the channel divergence $D_{\Meas}^{\inf'}$ is subadditive under tensor product of channels. That is, it does not inherit the properties of the state divergence, making it not a suitable channel extension. Moreover, in Figure~\ref{fig: chain rule enhancement}(b), we show that the chain rule property in Eq.~\eqref{eq: chain rule enhancement} does not hold, as there are cases such that $y < x_2$ in the plot.

\begin{figure}[h]
    \centering
    \includegraphics[width=0.9\linewidth]{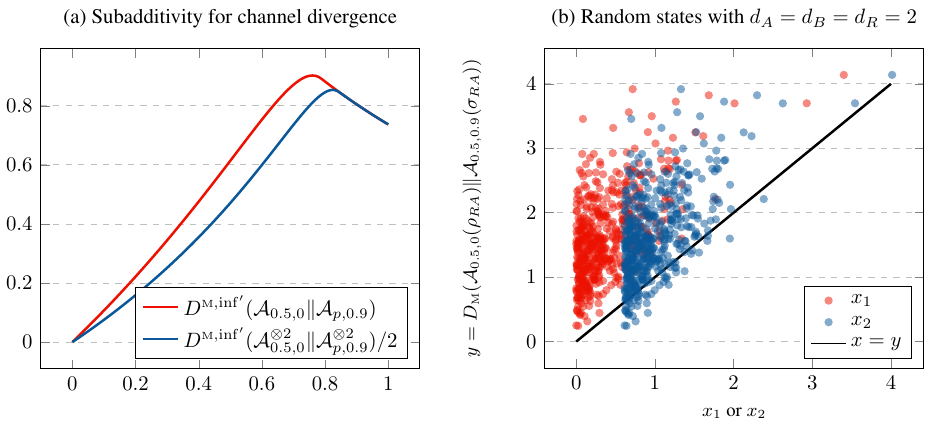}
    \caption{(a) Subadditivity for the channel divergence $D^{\Meas,\inf'}$ where $\cA_{0.5, 0}$ and $\cA_{p, 0.9}$ are the GAD channels and $p$ ranges from $0$ to $1$; (b) Random test for the chain rule property, where the quantum channels are chosen as $\cA_{0.5, 0}$ and $\cA_{0.5, 0.9}$, the quantum states are 500 randomly generated quantum states with real entries,  $x_1 = D_{\Meas}(\rho_R\|\sigma_R) + D_{\Meas}^{\inf}(\cA_{0.5, 0}\|\cA_{0.5, 0.9})$ and $x_2 = D_{\Meas}(\rho_R\|\sigma_R) + D_{\Meas}^{\inf'}(\cA_{0.5, 0}\|\cA_{0.5, 0.9})$.}
    \label{fig: chain rule enhancement}
\end{figure}

\section{Proof of the adversarial quantum Stein's lemma}
\label{sec: Proof of the adversarial quantum Stein's lemma}

Recall that after $n$ rounds of state generation, if the device behaves as $\cN$ and the adversary applies operations $\cP^i$, then the tester's overall state on $B_1\cdots B_n$ is given by:
\begin{align}
\rho[\{\cP^i\}_{i=1}^n] := \tr_{R_nE_n} \prod_{i=1}^{n} \left[\cU_{A_i \to B_i E_i} \circ \cP^i_{R_{i-1}E_{i-1} \to A_{i}R_{i}}\right].
\end{align}
Similarly, if the device is governed by $\cM$ and the internal operations by the adversary are given by $\cQ^i$, then the overall state is given by 
\begin{align}
\sigma[\{\cQ^i\}_{i=1}^n] := \tr_{R_nE_n} \prod_{i=1}^{n} \left[\cV_{A_i \to B_i E_i} \circ \cQ^i_{R_{i-1}E_{i-1} \to A_{i}R_{i}}\right].  
\end{align}
Due to limited knowledge of the device's internal workings and the adversary's strategies, the tester only has access to partial information, knowing that their possessed state belongs to one of two sets:
\begin{align}\label{eq: sets An}
    \sA_n := \{\rho[\{\cP^i\}_{i=1}^n] : \cP^i \in \CPTP(R_{i-1}E_{i-1}\!\!:\!\!A_iR_i), \forall R_i, \forall i\},
\end{align}
or
\begin{align}\label{eq: sets Bn}
    \sB_n := \{\sigma[\{\cQ^i\}_{i=1}^n] : \cQ^i \in \CPTP(R_{i-1}E_{i-1}\!\!:\!\!A_iR_i), \forall R_i, \forall i\},
\end{align}
where the adversary's internal memory $R_i$ may have arbitrarily large dimension.
In particular, if the adversary performs non-adaptive strategies, meaning they disregard the environmental systems $E_i$ and do not update their operations between rounds, then the sets $\sA_n$ and $\sB_n$ simplify to
\begin{align}
\sA_n' := \{\cN^{\otimes n}(\rho_n) : \rho_n \in \density(A^{\otimes n})\}, \quad
\sB_n' := \{\cM^{\otimes n}(\sigma_n) : \sigma_n \in \density(A^{\otimes n})\}.
\end{align}

Since non-adaptive strategies are a specific type of adaptive strategy, we have the inclusions that $\sA_n' \subseteq \sA_n \quad \text{and} \quad \sB_n' \subseteq \sB_n$.
This gives the relations for the type-I and type-II errors by $\alpha(\sA_n', M_n) \leq \alpha(\sA_n, M_n)$ and $\beta(\sB_n', M_n) \leq \beta(\sB_n, M_n)$.
So we have the general relation that 
\begin{align}
    \beta_{n,\ve}'(\cN\|\cM)  := \inf \{\beta(\sB_n', M_n): 0\leq M_n\leq I, \alpha(\sA_n', M_n) \leq \ve \} \leq  \beta_{n,\ve}(\cN\|\cM),
\end{align}
where the inequality holds because the right-hand side represents the infimum of a larger objective value over a smaller feasible set. 

\begin{theorem}(Adversarial quantum Stein's lemma.)\label{thm: adversarial quantum channel Steins lemma}
Let $\cN \in \CPTP(A\!:\!B)$ and $\cM \in \CP(A\!:\!B)$. 
Then for any $\ve \in (0,1)$, it holds that
\begin{align}\label{eq: Adversarial quantum Stein's lemma}
    \lim_{n\to \infty} - \frac{1}{n} \log \beta_{n, \ve}(\cN\|\cM) = D^{\inf,\reg}(\cN\|\cM).
\end{align}
Moreover, the Stein's exponent can be achieved by non-adaptive strategies, indicating that adaptive strategies by the adversary offer no advantage over non-adaptive ones.
\end{theorem}

\begin{proof}
{The first step is to simplify the minimax in $\beta_{n,\ve}(\cN\|\cM)$ by relating it to the quantum hypothesis testing relative entropy, which is mathematically more tractable. To do this, we show that the sets $\sA_n$ and $\sB_n$ are convex via explicit construction of adaptive strategies (see Lemma~\ref{lem: ad channel state convex} below), and apply~\cite[Lemma 31]{fang2024generalized}. This gives
\begin{align}
    -\log \beta_{n,\ve}(\cN\|\cM) & = D_{\Hypo,\ve}(\sA_n\|\sB_n).
\end{align}  
Similarly, it is clear that $\sA_n'$ and $\sB_n'$ are convex, and we also have
\begin{align}
        -\log \beta_{n,\ve}'(\cN\|\cM) & = D_{\Hypo,\ve}(\sA_n'\|\sB_n').
\end{align}
After this, the proof for the adversarial quantum Stein's lemma then contains two parts. The achievable part aims to show that 
\begin{align}
\limsup_{n\to \infty} \frac{1}{n} D_{\Hypo,\ve}(\sA_n'\|\sB_n') \leq D^{\inf,\reg}(\cN\|\cM),
\end{align}
by applying the generalized quantum asymptotic equipartition (AEP) property in~\cite[Theorem 25]{fang2024generalized} to $\sA_n'$ and $\sB_n'$. 
while the converse part makes use of the chain rule property to show that 
\begin{align}
\liminf_{n\to \infty} \frac{1}{n} D_{\Hypo,\ve}(\sA_n\|\sB_n) \geq D^{\inf,\reg}(\cN\|\cM).
\end{align}}

\emph{1) Achievabe part:} Using semidefinite programming duality, we can show that the sets $\{\sA_n'\}_n$ and $\{\sB_n'\}_n$ meet all the structural assumptions to apply the generalized AEP in~\cite[Theorem 25]{fang2024generalized}.
First, the set of all density matrices $\density$ is convex and compact, so $\sA_n'$ is also convex and compact. Since $\cN^{\ox n}$ and $\density$ are permutation invariant, we know that $\sA_n'$ is also permutation invariant. For any $\cN^{\ox m}(\rho_m) \in \sA_m'$ and $\cN^{\ox k}(\rho_k) \in \sA_k'$, we have 
\begin{align} 
    \cN^{\ox m}(\rho_m) \ox \cN^{\ox k}(\rho_k) = \cN^{\ox (m+k)}(\rho_m\ox \rho_k)\in \sA_{m+k}'.
\end{align} 
This implies $\sA_m' \ox \sA_k' \subseteq \sA_{m+k}'$. The support function of $\sA_n'$ is given by 
\begin{align} 
    h_{\sA'_n}(X_n) := \sup_{\rho_n \in \density} \tr \left[X_n \cN^{\ox n}(\rho_n) \right]= \sup_{\rho_n \in \density} \tr \left[(\cN^{\ox n})^\dagger (X_n) \rho_n \right] = \lambda_{\max}\left((\cN^{\ox n})^\dagger (X_n)\right),
\end{align}
where $\lambda_{\max}(\cdot)$ denotes the maximum eigenvalue.
Therefore, for any $X_m \in \PSD$ and $X_k\in \PSD$, we have
$h_{\sA'_{m+k}}(X_m \ox X_k)  = h_{\sA'_{m}}(X_m)h_{\sA'_{k}}(X_k)$,
by the multiplicativity of the maximum eigenvalue under tensor product. This proves that $\{\sA'_n\}_{n\in \NN}$ satisfy all the required assumptions, and the same argument works for $\{\sB_n'\}_{n\in \NN}$. This gives \begin{align} 
    \limsup_{n\to \infty} \frac{1}{n} D_{\Hypo,\ve}(\sA_n'\|\sB_n')  = D^{\inf,\reg}(\cN\|\cM),
\end{align} from~\cite{fang2024generalized} and proves the achievable part.

\emph{2) Converse part:} {Because the adversary's strategies can be adaptive, the sets $\sA_n$ and $\sB_n$ are not permutation invariant, so existing results such as the generalized AEP~\cite[Theorem 25]{fang2024generalized} do not directly apply. To address this, the new chain rules become crucial.} The first step is to lower bound $D_{\Hypo,\ve}(\sA_n\|\sB_n)$ by applying the chain rule recursively.
Denote the joint states before the $n$-th use of the channel by 
\begin{align} 
    \rho'_n := \tr_{R_n}\circ\;\cP^{n}\circ  \prod_{i=1}^{n-1}  \cU\circ \cP^i, \quad \text{and} \quad \sigma'_n := \tr_{R_n}\circ\;\cQ^{n}\circ \prod_{i=1}^{n-1}  \cV\circ \cQ^i.
\end{align} 
Denote $\rho_n := \rho[\{\cP^i\}_{i=1}^n]$ and $\sigma_n := \sigma[\{\cQ^i\}_{i=1}^n]$. 
Then we have that $\rho_n = \tr_{R_nE_n} \circ \,\cU (\rho'_n) = \cN(\rho_n')$ and $\sigma_n = \tr_{R_nE_n} \circ \,\cV(\sigma_n') = \cM(\sigma_n')$. 
Note that for any $\cP^n$ and $\cQ^n$ we always have $\tr_{A_nR_n} \circ \, \cP^{n} = \tr_{E_{n-1}R_{n-1}}$ and $\tr_{A_nR_n} \circ \, \cQ^{n} = \tr_{E_{n-1}R_{n-1}}$.
This gives the relations that $\tr_{A_n}(\rho_n') = \rho_{n-1}$ and $\tr_{A_n}(\sigma_n') = \sigma_{n-1}$.
Applying the chain rule in Lemma~\ref{lem: chain rule DM} to $\rho_n$ and $\sigma_n$, we have
\begin{align}
    D_{\Sand,\alpha}(\rho_n\|\sigma_n)\geq D_{\Sand,\alpha}(\rho_{n-1}\|\sigma_{n-1}) + D^{\inf,\reg}_{\Sand,\alpha}(\cN\|\cM).
\end{align}
Recursively applying this relation $n$ times, we get
\begin{align}
    D_{\Sand,\alpha}(\rho_n\|\sigma_n) \geq n D^{\inf,\reg}_{\Sand,\alpha}(\cN\|\cM).
\end{align}
As this holds for any operations $\cP^i$ and $\cQ^i$, we get
\begin{align}
    D_{\Sand,\alpha}(\sA_n\|\sB_n) \geq n D^{\inf,\reg}_{\Sand,\alpha}(\cN\|\cM).
\end{align}

Using the relation of hypothesis testing relative entropy and the Petz \Renyi divergence in~\cite[Proposition 3]{qi2018applications} and the fact that Petz \Renyi divergence is no smaller than the sandwiched \Renyi divergence~\cite[Eq.(4.88)]{Tomamichel2015b}, we have for any $\alpha \in [1/2,1)$ and $\ve \in (0,1)$, that
\begin{align}
D_{\Hypo,\ve}(\sA_n\|\sB_n) \geq D_{\Sand,\alpha}(\sA_n\|\sB_n) + (\log (1/\ve))\alpha/(\alpha-1).
\end{align}
Combining the above relations, we have 
\begin{align}
    D_{\Hypo,\ve}(\sA_n\|\sB_n) \geq n D^{\inf,\reg}_{\Sand,\alpha}(\cN\|\cM) + (\log (1/\ve))\alpha/(\alpha-1).
\end{align}
Taking the limits on both sides, we get
\begin{align}
\liminf_{n\to \infty} \frac{1}{n} D_{\Hypo,\ve}(\sA_n\|\sB_n) \geq \sup_{\alpha \in [1/2,1)}D^{\inf,\reg}_{\Sand,\alpha}(\cN\|\cM).
\end{align}
Finally, note that 
\begin{align} 
    \sup_{\alpha \in [1/2,1)} D^{\inf,\reg}_{\Sand,\alpha}(\cN\|\cM) \geq \sup_{\alpha \in [1/2,1)} D^{\inf,\reg}_{\Meas,\alpha}(\cN\|\cM) = D^{\inf,\reg}(\cN\|\cM),
\end{align} 
where the inequality follows as $D_{\Meas,\alpha}(\rho\|\sigma) \leq D_{\Sand,\alpha}(\rho\|\sigma)$ for $\alpha \in [1/2,1)$, and the equality is a consequence of~\cite[Lemma 27, 28]{fang2024generalized}, applied to the image sets of the channels $\sA_n'$ and $\sB_n'$.
\end{proof}

\begin{lemma}(Convexity.)\label{lem: ad channel state convex}
    The sets $\{\sA_n\}_{n\in \NN}$ and $\{\sB_n\}_{n\in \NN}$ defined in Eqs.~\eqref{eq: sets An} and~\eqref{eq: sets Bn} are convex.
\end{lemma}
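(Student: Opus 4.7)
The plan is to argue that any convex combination $\lambda\rho_1+(1-\lambda)\rho_2$ of outputs in $\sA_n$ can itself be realized by a single adaptive adversarial strategy, by running the two given strategies ``in superposition'' using a classical flag stored in the (arbitrarily large) internal memory. The identical construction will work for $\sB_n$ after replacing $\cU$ by $\cV$, so it suffices to treat $\sA_n$.

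Concretely, suppose $\rho_1=\rho[\{\cP^i\}_{i=1}^n]$ with memory registers $R_i$ and $\rho_2=\rho[\{\tilde\cP^i\}_{i=1}^n]$ with memory registers $\tilde R_i$, and fix $\lambda\in[0,1]$. Define enlarged memory spaces $R_i^{\new}:=R_i\otimes\tilde R_i\otimes F$, where $F$ is a two-dimensional classical flag register. I would then build a new strategy $\{\cP^{i,\new}\}_{i=1}^n$ as follows. The first operation $\cP^{1,\new}\in\CPTP(R_0E_0\!:\!A_1R_1^{\new})$ initializes $F$ in the classical state $\lambda\proj{0}+(1-\lambda)\proj{1}$, and, controlled on $F$, either applies $\cP^1$ (populating $R_1$ and leaving $\tilde R_1$ in a fixed reference state) or applies $\tilde\cP^1$ (populating $\tilde R_1$ and leaving $R_1$ in a fixed reference state). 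For $i\geq 2$, the operation $\cP^{i,\new}\in\CPTP(R_{i-1}^{\new}E_{i-1}\!:\!A_iR_i^{\new})$ reads the flag, copies it forward to $F\subseteq R_i^{\new}$, and, conditioned on the flag value, either applies $\cP^i$ to $R_{i-1}E_{i-1}$ while passing through $\tilde R_{i-1}$ unchanged, or applies $\tilde\cP^i$ to $\tilde R_{i-1}E_{i-1}$ while passing through $R_{i-1}$ unchanged. Each $\cP^{i,\new}$ is manifestly a valid CPTP map, so $\{\cP^{i,\new}\}_{i=1}^n$ is an admissible adaptive strategy.

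To conclude, I would verify that $\rho[\{\cP^{i,\new}\}_{i=1}^n]=\lambda\rho_1+(1-\lambda)\rho_2$. Since the flag register remains classical throughout (it is only prepared, read and copied in the computational basis) and the Stinespring dilation $\cU$ acts trivially on $R_i^{\new}$, the overall dynamics splits into two non-interfering branches weighted by $\lambda$ and $1-\lambda$: in the first branch the registers $\tilde R_i$ are idle and the evolution on the active systems reproduces $\prod_i\cU\circ\cP^i$, and symmetrically for the second branch. Tracing out $R_n^{\new}E_n$ then yields exactly $\lambda\rho_1+(1-\lambda)\rho_2$, placing it in $\sA_n$. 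Applying the same construction with $\cU$ replaced by $\cV$ and $\cP^i,\tilde\cP^i$ replaced by $\cQ^i,\tilde\cQ^i$ establishes convexity of $\sB_n$.

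The only genuine obstacle is making sure that the flag information is preserved across rounds even though the interface at round $i$ only feeds $R_{i-1}E_{i-1}$ into $\cP^i$; this is handled by absorbing $F$ into the memory system and using the fact that the internal register dimensions $R_i$ are unrestricted, so there is ample room to carry both the flag and the idle branch's auxiliary memory forward at every step. No other steps present a real difficulty, since once the classical branching is in place, the equality of output states reduces to linearity of the channels involved.
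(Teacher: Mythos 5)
Your construction is correct and is essentially the same as the paper's own proof: both realize the convex combination by adjoining a two-dimensional classical flag to the (unbounded) internal memory, preparing it with weights $\lambda$ and $1-\lambda$, and conditioning each round's internal operation on the flag so that the two strategies run in non-interfering branches that commute with $\cU$. The only cosmetic difference is that you carry both memory registers $R_i\otimes\tilde R_i$ forward, whereas the paper assumes without loss of generality that the two strategies share the same memory systems.
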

\begin{proof}
We prove the assumptions for $\{\sA_n\}_{n\in \NN}$ and the same reasoning works for $\{\sB_n\}_{n\in \NN}$ as well.   
Let $\{\cP^i\}_{i=1}^n$ with systems $R_1, \dots, R_n$ and $\{\bar{\cP}^i\}_{i=1}^n$ with systems $R_1, \dots, R_n$ be two strategies and $\lambda \in [0,1]$. Note that we may assume both strategies have the same systems $R_i$ as we can always increase the dimension of the systems $R_i$ by extending the action of the channel in an arbitrary way without affecting the output.
Let us now define another strategy $\{\bar{\bar{\cP}}^i\}_{i=1}^n$ as follows. Let $\bar{\bar{R}}_i = R_i \otimes C$ for all $i=1,\dots,n$ where $C$ is a two-dimensional system. Then define 
\begin{align}
    \bar{\bar{\cP}}^1(\cdot) = \lambda \cP^1(\cdot) \ox \proj{0}_C + (1-\lambda) \bar{\cP}^1(\cdot) \ox \proj{1}_C
\end{align}
and for $i \geq 2$, define 
\begin{align}
    \bar{\bar{\cP}}^i(X) = (\cP^i \circ \cC_0(X)) \ox \proj{0}_C + (\bar{\cP}^i \circ \cC_1(X)) \ox \proj{1}_C
\end{align}
where $\cC_0(X) = \<0|_C X |0\>_C$ and $\cC_1(X) = \<1|_C X |1\>_C$. Since $\cC_0, \cC_1, \cP^i, \bar{\cP}^i$ are all CP maps, we know that $\bar{\bar{\cP}}^i$ is also a CP map. Moreover, if $\cP^i, \bar{\cP}^i$ are trace-preserving, then $\bar{\bar{\cP}}^i$ is also trace-preserving. It is easy to check the following relations,
\begin{alignat}{3}
    \cC_0 \circ \bar{\bar{\cP}}^1 & = \lambda \cP^1 \quad & \text{and} \quad \cC_1 \circ \bar{\bar{\cP}}^1 & = (1-\lambda) \bar{\cP}^1,\label{eq: ad convex tmp1}\\
    \cC_0 \circ \bar{\bar{\cP}}^i & = \cP^i \circ \cC_0 \quad & \text{and} \quad \cC_1 \circ \bar{\bar{\cP}}^i & = \bar{\cP}^i \circ \cC_1, \quad \forall i \geq 2.\label{eq: ad convex tmp2}
\end{alignat}
Noting that $\tr_C$ commutes with $\cU$ as they are acting on different systems, we have 
\begin{align}\label{eq: ad convex tmp3}
\tr_C \circ \prod_{i=1}^n \cU \circ \bar{\bar{\cP}}^i 
    & = \cU \circ \cP^n \circ \cC_0 \circ \prod_{i=1}^{n-1} \cU \circ \bar{\bar{\cP}}^i + \cU \circ \bar{\cP}^n \circ \cC_1 \circ \prod_{i=1}^{n-1} \cU \circ \bar{\bar{\cP}}^i.
\end{align}
Also noting that $\cC_0$ and $\cC_1$ both commute with $\cU$ as they are acting on different systems and using the relations in Eqs.~\eqref{eq: ad convex tmp1} and~\eqref{eq: ad convex tmp2}, we have 
\begin{align}
    \cU \circ \cP^n \circ \cC_0 \circ \prod_{i=1}^{n-1} \cU \circ \bar{\bar{\cP}}^i & = \lambda\, \cU \circ \cP^n \circ \prod_{i=1}^{n-1} \cU \circ {{\cP}}^i = \lambda\, \prod_{i=1}^{n} \cU \circ {{\cP}}^i,\\
    \cU \circ \bar{\cP}^n \circ \cC_1 \circ \prod_{i=1}^{n-1} \cU \circ \bar{\bar{\cP}}^i & = (1-\lambda)\, \cU \circ \cP^n \circ \prod_{i=1}^{n-1} \cU \circ {\bar{\cP}}^i = (1-\lambda)\, \prod_{i=1}^{n} \cU \circ {\bar{\cP}}^i.
\end{align}
Taking these into Eq.~\eqref{eq: ad convex tmp3}, we have 
\begin{align}
    \tr_{\bar{\bar{R}}_n E_n} \circ \prod_{i=1}^n \cU \circ \bar{\bar{\cP}}^i = \lambda\, \tr_{R_n E_n} \circ  \prod_{i=1}^{n} \cU \circ {{\cP}}^i + (1-\lambda)\, \tr_{R_n E_n} \circ  \prod_{i=1}^{n} \cU \circ {\bar{\cP}}^i
\end{align}
This shows that any mixture of the reduced states on $B_1\cdots B_n$ by the strategies $\{\cP^i\}_{i=1}^n$ and $\{\bar{\cP}^i\}_{i=1}^n$ is also given by the reduced state of another strategy $\{\bar{\bar{\cP}}^i\}_{i=1}^{n}$ which proves the convexity of $\sA_n$. 
\end{proof}

\section{Computational aspect of the Stein's exponent} 

{The minimum output channel divergence, $D^{\inf}(\cN\|\cM) = \inf_{\rho,\sigma} D(\cN(\rho)\|\cM(\sigma))$ is equal to zero if and only if there exist input states $\rho$ and $\sigma$ such that the corresponding output states $\cN(\rho)$ and $\cM(\sigma)$ are identical. While this can occur in certain special cases, it is not generic. Here we can list a few examples:
\begin{enumerate}
\item In recovering the quantum Stein’s lemma between states, we take both channels as replacer channels. In this case, $D^{\inf}$ reduces to the quantum relative entropy between their fixed output states, which is zero if and only if those states are identical. 
\item In many practically relevant scenarios—such as determining whether a device produces a resource state or a junk (free) state (e.g., channels that prepare a fixed entangled state versus entanglement-breaking channels, or channels that prepare a fixed coherent state versus complete dephasing channels)—the outputs cannot coincide, as the set of resource states have no overlap with the set of free states by definition.
\item Moreover, in the context of entropy accumulation, $D^{\inf}$ reduces to the conditional entropy $\sup_{\omega} H(S_i|C_i)_{\cN(\omega)}$ which is generically nonzero. \item Finally, for discriminating between two quantum erasure channels—a noise model commonly used in photonic and trapped-ion systems— $D^{\inf}$ equals zero only if the two channels are identical.
\end{enumerate} 
}

{Moreover, the minimum output channel divergence is generally non-additive~\cite{fang2025efficient}, which necessitates taking the regularized limit in the Stein's exponent $D^{\inf,\reg}(\cN\|\cM)$. While regularization typically complicates estimation, the fact that the optimal exponent is achieved by non-adaptive strategies leads to a significant simplification: it admits explicit convergence bounds, even in the presence of regularization. This exhibits a unique feature in quantum information theory and enables efficient computation of the exponent via a polynomial-time algorithm. Specifically, as the sets $\sA_n'$ and $\sB_n'$ from non-adaptive strategies fall within the framework of the generalized AEP in~\cite{fang2024generalized}, the Stein's exponent $D^{\inf,\reg}(\cN\|\cM)$ for $\cN \in \CPTP(A:B)$ and $\cM \in \CP(A:B)$ can be approximated within an additive error $\delta$ by a quantum relative entropy program of size $O((l+1)^k)$, where $k = \max\{|A|^2, |B|^2\}$ is given by the channel dimensions, and $l = \lceil \frac{8|B|^2}{\delta} \log \frac{|B|^2}{\delta} \rceil$ relates to the expected accuracy. Further details on this computational aspect can be found in the accompanying paper~\cite{fang2025efficient}.
In sharp contrast, for best-case channel discrimination, the optimal exponent is not even known to be computable by any means, regardless of computational resources.}

\section{Relative entropy accumulation}
\label{sec: Application 3: entropy accumulation theorem}

The entropy accumulation theorem~\cite{dupuis2020entropy,metger2022generalised} is a technique to find bounds on the operationally relevant uncertainty (entropy) present in the outputs of a sequential process as a sum of the worst case uncertainties (entropies) of each step. It has been widely used in quantum cryptography~\cite{arnon2018practical,liu2018device,george2022finite}. This naturally raises the question of whether such a statement can be generalized to divergences between arbitrary sequential processes of channels, rather than being restricted only to entropies. This was first asked as an open question in~\cite{metger2022generalised} for the max-relative entropy.

More specifically, the operational setting for relative entropy accumulation is depicted in Figure~\ref{fig:eat setting}. Consider two states $\rho_{A_1}$ and $\sigma_{A_1}$ and quantum channels $\cN_i \in \CPTP(A_i\!:\!A_{i+1} B_i)$ and $\cM_i \in \CP(A_i\!:\!A_{i+1} B_i)$ that are applied sequentially from $i=1$ to $i=n$ and generating the systems $B_i$. The systems $A_i$ should be seen as an internal memory system that we do not control. The key question in the relative entropy accumulation asks: \emph{Can we bound the operationally relevant divergence between the obtained states as the sum of the contributions of each step?}

\begin{figure}[H]
    \centering
    \includegraphics[width=0.6\linewidth]{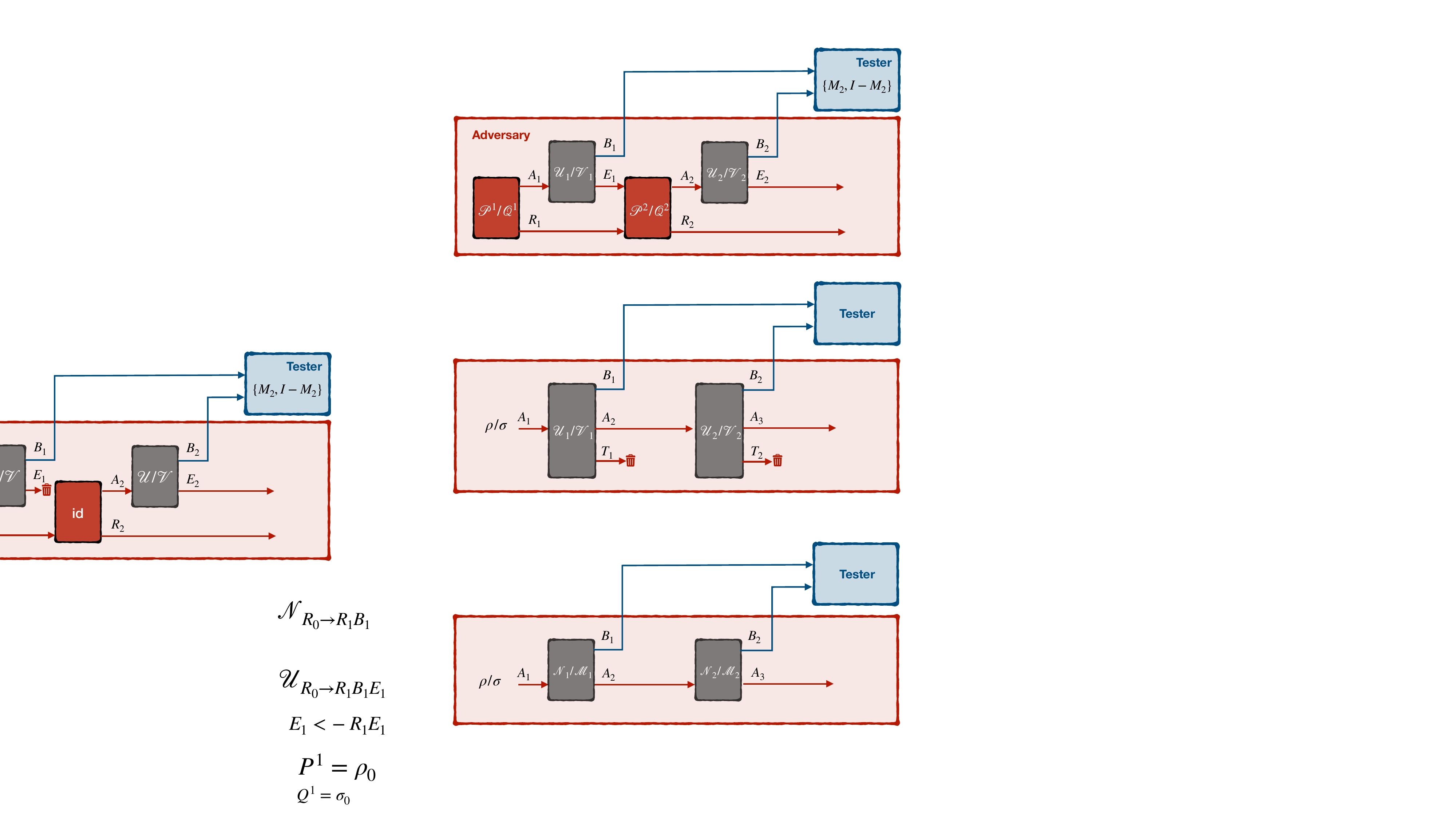}
    \caption{Illustration of the setting of the relative entropy accumulation.}
    \label{fig:eat setting}
\end{figure}

Similar to the adversarial channel discrimination in the main text, we denote the Stinespring dilations of $\cN_i$ and $\cM_i$ as $U_i$ and $V_i$, respectively, with the environmental system denoting as $T_i$. Denote also the corresponding channel as $\cU_i(\cdot)=U_i(\cdot)U_i^\dagger$ and $\cV_i(\cdot) = V_i(\cdot)V_i^\dagger$. Note that $U_i$ is an isometry because $\cN_i$ is trace-preserving, but this is not necessarily the case for $V_i$. Then the setting in Figure~\ref{fig:eat setting} is equivalent to the diagram in Figure~\ref{fig:eat setting dilation}. 

\begin{figure}[h]
    \centering
    \includegraphics[width=0.6\linewidth]{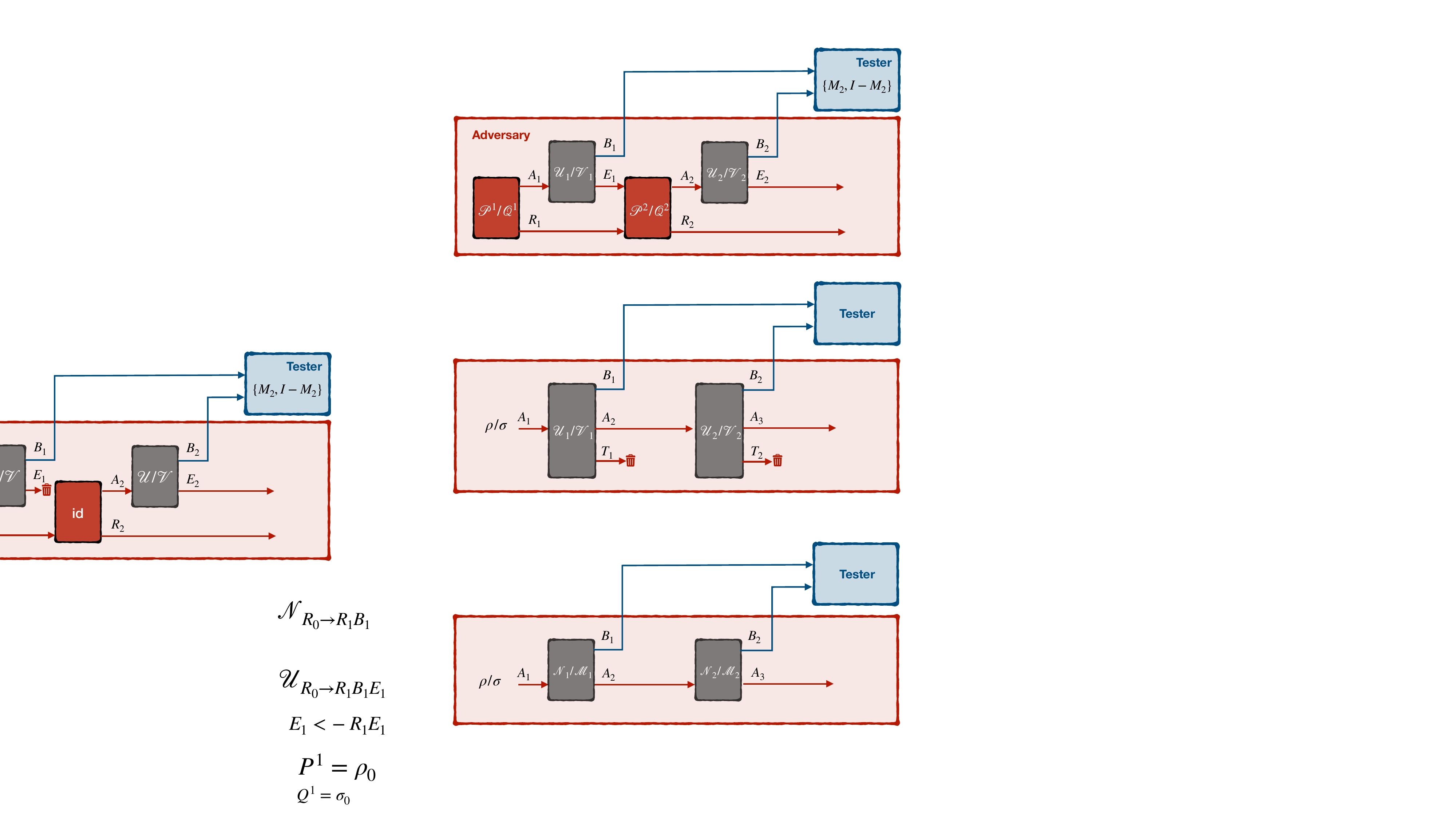}
    \caption{Illustration of the setting of the relative entropy accumulation with Stinespring dilation.}
    \label{fig:eat setting dilation}
\end{figure}

By extending the adversarial channel discrimination framework to allow different channels $\cN_i$ and $\cM_i$ at each round, we can unify the relative entropy accumulation within this broader framework. More specifically, using a similar notation in the main text, we denote the Stinespring dilation of $\cN_i$ and $\cM_i$ as $U_i$ and $V_i$, respectively. Denote also the corresponding channel as $\cU_i(\cdot)=U_i(\cdot)U_i^\dagger$ and $\cV_i(\cdot) = V_i(\cdot)V_i^\dagger$. An illustrative figure is given in Figure~\ref{fig:extended channel discrimination}. Then, taking a particular choice,
\begin{align}
    \dim R_i & = 1,\\
    E_i & = A_{i+1}T_i\\
    \cP^1 & = \rho_{A_1},\\
    \cQ^1 & = \sigma_{A_1},\\
    \cP^i & = \cQ^i = \tr_{T_{i-1}}, \quad \forall i \geq 2,
\end{align}
the adversarial discrimination framework in Figure~\ref{fig:extended channel discrimination} reduces to the relative entropy accumulation in Figure~\ref{fig:eat setting dilation}.

\begin{figure}[H]
    \centering
    \includegraphics[width=0.62\linewidth]{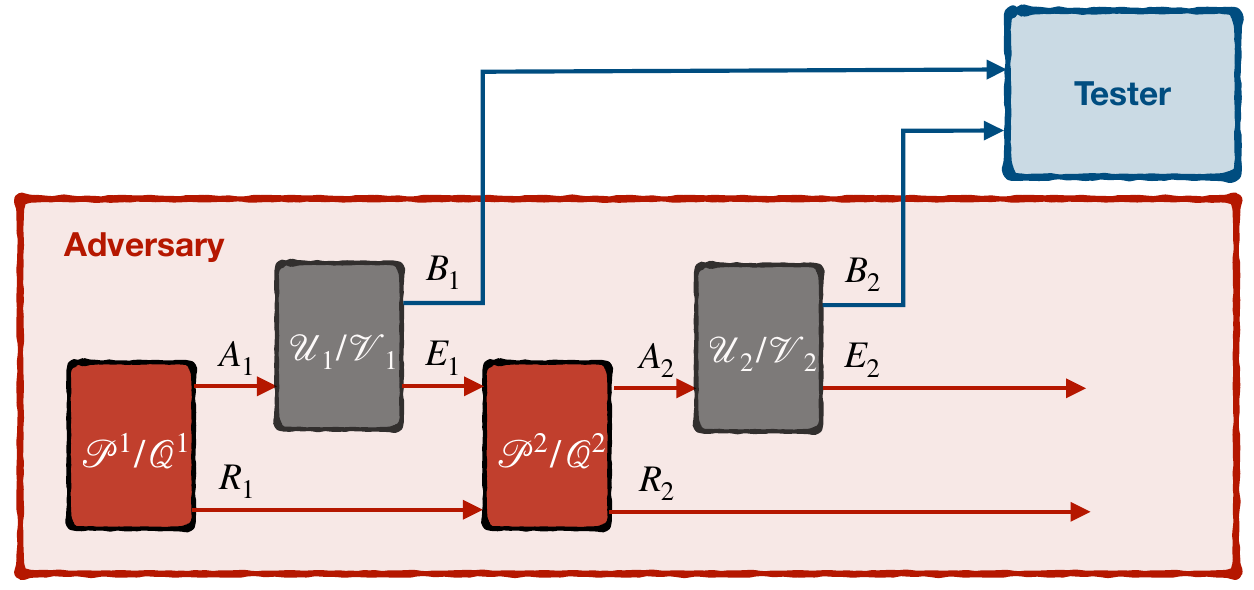}
    \caption{Illustration of an extended setting of adversarial quantum channel discrimination. 
    }
    \label{fig:extended channel discrimination}
\end{figure}

This new perspective enables us to establish a relative entropy accumulation theorem for $D_{\Hypo, \ve}$ (a smoothed form of the min-relative entropy), giving an answer to the dual formulation of the open question in~\cite{metger2022generalised}. Specifically, we generalize the converse part of the adversarial quantum Stein's lemma in two ways: we allow the channels applied at different steps to vary, and we compute explicit finite-size bounds. 


We state the relative entropy accumulation theorem in this more general setting depicted in Figure~\ref{fig:extended channel discrimination}. From the above discussion, it is clear that the version stated in the main text is a special case.
\begin{theorem}
\label{thm:reat}(Relative entropy accumulation theorem.)
Consider two sequences of maps $\cU_{i} \in \CPTP(A_i\!:\!B_i E_{i})$ and $\cV_i \in \CP(A_i\!:\!B_i E_{i})$ for $i \in \{1,\dots,n\}$ and let $\cP^i, \cQ^{i} \in \CPTP(E_{i-1}R_{i-1}, A_iR_{i})$ as illustrated in Figure~\ref{fig:extended channel discrimination}.
Let $\ve \in (0,1)$ and assume for all $i\in \{1, \dots, n\}$ and $m \geq 2$,
\begin{align}
\label{eq: assumptionC N M}
     \forall \alpha \in [1/2,1]: \; D_{\Petz,3/2}(\rho^{(\alpha)}_m \| \sigma^{(\alpha)}_m) \leq \frac{C}{4}m \; \text{ and } \; \forall \sigma \in \density(A_i) : \; \log \tr(\cV_i(\sigma)) \leq \frac{C}{4}, 
\end{align}
where 
$\rho^{(\alpha)}_m, \sigma^{(\alpha)}_m$ are outputs of the channels $(\tr_{E_{i}} \circ \cU_i)^{\otimes m},(\tr_{E_{i}} \circ \cV_i)^{\otimes m}$ (respectively) and achieve the minimum $D_{\Petz,\alpha}(\rho_m^{(\alpha)} \| \sigma_m^{(\alpha)}) = D^{\inf}_{\Petz,\alpha}( (\tr_{E_{i}} \circ \cU_i)^{\otimes m} \| (\tr_{E_{i}} \circ \cV_i)^{\otimes m})$. 
Then letting $\rho_n = \tr_{R_nE_n} \prod_{i=1}^n \cU_{i} \circ \cP^{i}$ and $\sigma_n = \tr_{R_n E_n} \prod_{i=1}^n \cV_{i} \circ \cQ^{i}$, we have
\begin{align}
\label{eq: reat eq}
D_{\Hypo, \ve} & \left(\rho_n  \bigg\| \sigma_n\right)  \geq \sum_{i=1}^n D^{\inf,\reg}(\tr_{E_{i}} \circ \cU_i \| \tr_{E_{i}} \circ \cV_i) - C' n^{2/3} \log n \log^{1/3} \frac{1}{\ve}, 
\end{align}
where $C'$ is a constant that only depends on $d = \max_{i} \dim B_i$ and $C$.
\end{theorem}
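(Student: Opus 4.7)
The plan is to generalize the converse argument of Theorem~\ref{thm: adversarial quantum channel Steins lemma} in two directions: allowing different channels at each step, and tracking explicit finite-size error terms. Fix $\alpha \in [1/2, 1)$. As in the proof of Theorem~\ref{thm: adversarial quantum channel Steins lemma}, let $\rho'_i$ (resp.\ $\sigma'_i$) denote the joint state on $B_1\cdots B_{i-1}A_i$ just before the $i$-th channel application, so that $\rho_i = (\tr_{E_i}\!\circ\cU_i)(\rho'_i)$, $\tr_{A_i}(\rho'_i) = \rho_{i-1}$, and analogously for $\sigma$. Applying the sandwiched Rényi chain rule from Lemma~\ref{lem: chain rule DM} with reference system $B_1\cdots B_{i-1}$ yields
\begin{align*}
D_{\Sand,\alpha}(\rho_i\|\sigma_i) \geq D_{\Sand,\alpha}(\rho_{i-1}\|\sigma_{i-1}) + D_{\Sand,\alpha}^{\inf,\reg}(\tr_{E_i}\!\circ\cU_i\,\|\,\tr_{E_i}\!\circ\cV_i).
\end{align*}
Telescoping from $i=n$ down to $i=1$ and using $D_{\Sand,\alpha}(\rho_{A_1}\|\sigma_{A_1}) \geq 0$ gives a lower bound on $D_{\Sand,\alpha}(\rho_n\|\sigma_n)$ by the sum of the regularized Rényi channel divergences. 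Combined with the Rényi-to-hypothesis-testing inequality $D_{\Hypo,\ve}(\rho\|\sigma) \geq D_{\Sand,\alpha}(\rho\|\sigma) - \frac{\alpha}{1-\alpha}\log(1/\ve)$ used in the converse proof of Theorem~\ref{thm: adversarial quantum channel Steins lemma}, this yields a parametric family of lower bounds for $D_{\Hypo,\ve}(\rho_n\|\sigma_n)$.

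The new step, and the main technical difficulty, is a quantitative continuity bound of the form
\begin{align*}
D_{\Sand,\alpha}^{\inf,\reg}(\tr_{E_i}\!\circ\cU_i\|\tr_{E_i}\!\circ\cV_i) \geq D^{\inf,\reg}(\tr_{E_i}\!\circ\cU_i\|\tr_{E_i}\!\circ\cV_i) - \kappa(m,\alpha),
\end{align*}
valid for every block size $m$ and every $\alpha$ close to $1$. I would establish this by evaluating $D_{\Sand,\alpha}^{\inf}((\tr_{E_i}\!\circ\cU_i)^{\ox m}\|(\tr_{E_i}\!\circ\cV_i)^{\ox m})$ at the near-optimal test states $\rho_m^{(\alpha)},\sigma_m^{(\alpha)}$ guaranteed by assumption~\eqref{eq: assumptionC N M} to satisfy $D_{\Petz,3/2}(\rho_m^{(\alpha)}\|\sigma_m^{(\alpha)}) \leq Cm/4$, and then expanding the sandwiched Rényi divergence around $\alpha=1$. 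The Petz bound at order $3/2$ controls the curvature of $\alpha\mapsto D_{\Sand,\alpha}$ at these states, giving $D_{\Sand,\alpha}(\rho_m^{(\alpha)}\|\sigma_m^{(\alpha)}) \geq D(\rho_m^{(\alpha)}\|\sigma_m^{(\alpha)}) - O((1-\alpha)m)$. Using $D(\rho_m^{(\alpha)}\|\sigma_m^{(\alpha)}) \geq D^{\inf}((\tr_{E_i}\!\circ\cU_i)^{\ox m}\|(\tr_{E_i}\!\circ\cV_i)^{\ox m})$, together with the convergence rate $D^{\inf,\reg} - \tfrac{1}{m}D^{\inf}(\cdot^{\ox m}\|\cdot^{\ox m}) = O(\log m / m)$ that follows from the generalized AEP of~\cite{fang2024generalized} (applied to the image sets of the channels, exactly as in the achievability part of Theorem~\ref{thm: adversarial quantum channel Steins lemma}), one obtains $\kappa(m,\alpha) = O((1-\alpha)m + \log m / m)$ with constants depending only on $C$ and $d$.

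Substituting the continuity bound into the Rényi-chain-rule estimate and writing $\delta := 1-\alpha$, we arrive at
\begin{align*}
D_{\Hypo,\ve}(\rho_n\|\sigma_n) \geq \sum_{i=1}^n D^{\inf,\reg}(\tr_{E_{i}}\!\circ\cU_i\|\tr_{E_{i}}\!\circ\cV_i) - O\!\left(n\delta m + n\log m / m + \delta^{-1}\log(1/\ve)\right).
\end{align*}
Optimizing first over $m$ (choosing $m\sim\sqrt{\log(1/\delta)/\delta}$) and then over $\delta$ (choosing $\delta\sim(\log(1/\ve)/n)^{2/3}$, up to logarithmic corrections) balances the three error contributions and yields the claimed $O(n^{2/3}\log n \cdot \log^{1/3}(1/\ve))$ bound. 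I expect the main obstacle to be the state-level continuity estimate in the middle step: the uniform Petz bound in assumption~\eqref{eq: assumptionC N M} is what keeps the Taylor remainder of $\alpha\mapsto D_{\Sand,\alpha}$ at scale $O(m)$ rather than growing faster, which in turn is what allows both the division by $m$ and the subsequent passage to $m\to\infty$ to yield a regularized divergence with an explicit rate. Without this uniform control, neither the $n^{2/3}$ scaling nor the clean transfer from $D_{\Sand,\alpha}^{\inf,\reg}$ to $D^{\inf,\reg}$ could be carried out.
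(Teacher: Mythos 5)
Your overall architecture is the paper's: telescope the sandwiched R\'enyi chain rule, pass to $D_{\Hypo,\ve}$ at a cost of $\frac{\alpha}{1-\alpha}\log(1/\ve)$, approximate the regularized R\'enyi channel divergence by $D^{\inf,\reg}$ with a per-round error of order $(1-\alpha)m+\frac{\log m}{m}$, and then optimize $m$ and $1-\alpha$. Your parameter choices differ slightly from the paper's (the paper takes $1-\alpha\sim \log m/m^2$ and $m\sim (n/\log\frac{1}{\ve})^{1/3}$), but both land inside the claimed $O(n^{2/3}\log n\,\log^{1/3}\frac{1}{\ve})$ window, so the bookkeeping is fine.

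The gap is in your middle step, which is exactly the content the paper delegates to~\cite[Lemma 30]{fang2024generalized} (assumption~\eqref{eq: assumptionC N M} is verbatim that lemma's hypothesis). Two issues. First, to descend from $D^{\inf,\reg}_{\Sand,\alpha}$ to $\frac{1}{m}D^{\inf}_{\cdot,\alpha}((\cdot)^{\ox m})$ you need superadditivity, which is available for the \emph{measured} R\'enyi divergence on these image sets (via the polar-set verification and \cite[Lemma 21, 28]{fang2024generalized}), not for the sandwiched one directly; this is a routine repair. Second, and more seriously, you propose to lower bound the infimum $D^{\inf}_{\Sand,\alpha}((\cdot)^{\ox m})$ by ``evaluating at the near-optimal test states $\rho_m^{(\alpha)},\sigma_m^{(\alpha)}$''. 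But those states are by definition the optimizers of the \emph{Petz} quantity $D^{\inf}_{\Petz,\alpha}$; evaluating an infimum at states that optimize a different functional gives only an upper bound, so no lower bound on $D^{\inf}_{\Sand,\alpha}$ or $D^{\inf}_{\Meas,\alpha}$ follows. The working route is to first climb from $D_{\Meas,\alpha}$ to $D_{\Petz,\alpha}$ by a pinching-type argument --- this is the actual source of the $\frac{(d^2+d)\log(m+d)}{m}$ correction, not the convergence rate of the regularization, which in fact costs nothing since $D^{\inf}$ is subadditive under tensor products and hence $\frac{1}{m}D^{\inf}((\cdot)^{\ox m})\geq D^{\inf,\reg}$ for every $m$ --- and only then apply the $(1-\alpha)$-continuity estimate at the Petz optimizers, where the infimum is exactly attained and $D_{\Petz,3/2}(\rho_m^{(\alpha)}\|\sigma_m^{(\alpha)})\leq \frac{C}{4}m$ controls the first-order term (valid only for $\alpha$ within $O(1/m)$ of $1$, a restriction your parameter choice happens to respect but which you should state). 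With that step repaired, or simply outsourced to the cited lemma as the paper does, your argument coincides with the paper's proof.
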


\begin{proof}
The proof below is basically the same as the converse part of the adversarial quantum Stein's lemma in the main text. We start as usual by bounding the hypothesis testing relative entropy with a R\'enyi divergence of order $\alpha \in (0,1)$:
\begin{align}
D_{\Hypo, \ve}\left(\rho_n \bigg\| \sigma_n \right) 
\geq D_{\Sand,\alpha}\left(\rho_n \bigg\| \sigma_n \right) + \frac{\alpha}{\alpha-1}\log(1/\ve).
\end{align}
We can now introduce $\rho'_n = \tr_{R_n} \circ \cP^n \circ \prod_{i=1}^{n-1} \cU_i \circ \cP^i$ and $\sigma'_n = \tr_{R_n} \circ \cQ^n \circ \prod_{i=1}^{n-1} \cV_i \circ \cQ^i$ the joint states before the $n$-th use the channel $\cU_{i}/\cV_{i}$. We have $\rho_n = (\tr_{E_{n}} \circ \cU_{n})(\rho'_n)$ We use the chain rule for sandwiched relative entropy to bound 
\begin{align}
D_{\Sand,\alpha}\left( (\tr_{E_n} \circ \cU_{n})(\rho'_n) \bigg\| (\tr_{E_n} \circ \cV_n)(\sigma'_n)\right) &\geq D_{\Sand,\alpha}\left( \tr_{A_n} \rho'_n \bigg\| \tr_{A_n} \sigma'_n \right) + D^{\inf, \infty}_{\Sand,\alpha}(\tr_{E_n} \circ \cU_{n} \| \tr_{E_n} \circ \cV_n).
\end{align}
Now note that $\tr_{A_n} \rho'_n = \rho_{n-1}$ where $\rho_{n-1} = \tr_{R_{n-1}E_{n-1}} \prod_{i=1}^{n-1} \cU_{i} \circ \cP^{i}$ and similarly for $\sigma$. As a result, applying the chain rule $n-1$ times, we get
\begin{align}
D_{\Sand,\alpha}\left( \rho_n \bigg\| \sigma_n \right)
&\geq \sum_{i=1}^n D^{\inf,\infty}_{\Sand,\alpha}(\tr_{E_{i}} \circ \cU_i \| \tr_{E_{i}} \circ \cV_i) \\
&= \sum_{i=1}^n D^{\inf,\infty}_{\Meas,\alpha}(\tr_{E_{i}} \circ \cU_i \| \tr_{E_{i}} \circ \cV_i) \\
&\geq \sum_{i=1}^n \frac{1}{m} D^{\inf}_{\Meas,\alpha}((\tr_{E_{i}} \circ \cU_i)^{\otimes m} \| (\tr_{E_{i}} \circ \cV_i)^{\otimes m})
\end{align}
where $m \geq 2$ and we used~\cite[Lemma 28]{fang2024generalized} for the equality and the superadditivity of $D_{\Meas, \alpha}$ in~\cite[Lemma 21]{fang2024generalized} for the last inequality. Note that both of these results were applied for the family of states $\sA_m'' = (\tr_{E_{i}} \circ \cU_i)^{\otimes m}(\density)$ and $\sB_m'' = (\tr_{E_{i}} \circ \cV_i)^{\otimes m}(\density)$ which satisfies~\cite[Assumption 24]{fang2024generalized}, as shown in the proof of the adversarial quantum Stein's lemma.

Observe that assumption~\eqref{eq: assumptionC N M} implies assumption ($*$) in \cite[Lemma 30]{fang2024generalized} as $\log \tr(\cV_i^{\otimes m}(\sigma)) \leq m \log \lambda_{\max}(\cV_i^{\dagger}(I)) \leq \frac{C}{4} m$. As a result, \cite[Lemma 30]{fang2024generalized} gives
\begin{align}
    \frac{1}{m} D^{\inf}_{\Meas,\alpha}&((\tr_{E_{i}} \circ \cU_i)^{\otimes m} \| (\tr_{E_{i}} \circ \cV_i)^{\otimes m})\notag \\
    &\geq D^{\inf, \infty}(\tr_{E_{i}} \circ \cU_i \| \tr_{E_{i}} \circ \cV_i) - (1-\alpha) (2+C)^2 m - \frac{2(d^2+d) \log(m+d)}{m},
\end{align}
for $1-\frac{1}{(2+C)m} < \alpha < 1$. Let us now choose $1-\alpha = \frac{8d^2 \log m}{(2+C)^2 m^2}$ and assume that $m \geq \max\left(d,\left(\frac{16 d^2}{2+C}\right)^2\right)$ so that the condition $\alpha \geq 1 - \frac{1}{(2+C)m}$ is satisfied and $\log(m+d) \leq 2 \log m$. 

Putting everything together, we get
\begin{align}
D_{\Hypo, \ve}  \left(\rho_n \bigg\| \sigma_n \right) \geq \sum_{i=1}^n D^{\inf,\reg}(\tr_{E_{i}} \circ \cU_{i} \| \tr_{E_{i}} \circ \cV_{i}) - n \frac{16d^2 \log m}{m} - \frac{(2+C)^2 m^2}{8d^2\log m}\log \frac{1}{\ve}.
\end{align}
We now choose $m = \left(\frac{64 d^4 n }{(2+C)^2 \log \frac{1}{\ve} }\right)^{1/3} $.  With this choice
\begin{align}
D_{\Hypo, \ve}\left(\rho_n \bigg\| \sigma_n \right)
&\geq \sum_{i=1}^n D^{\inf,\reg}(\tr_{E_{i}} \circ \cU_{i} \| \tr_{E_{i}} \circ \cV_{i}) - C' n^{2/3} \log n \log^{1/3} \frac{1}{\ve},
\end{align}
for a constant $C'$ that only depends on $C$ and $d$.
\end{proof}

\begin{figure}[H]
    \centering
    \includegraphics[width=0.6\linewidth]{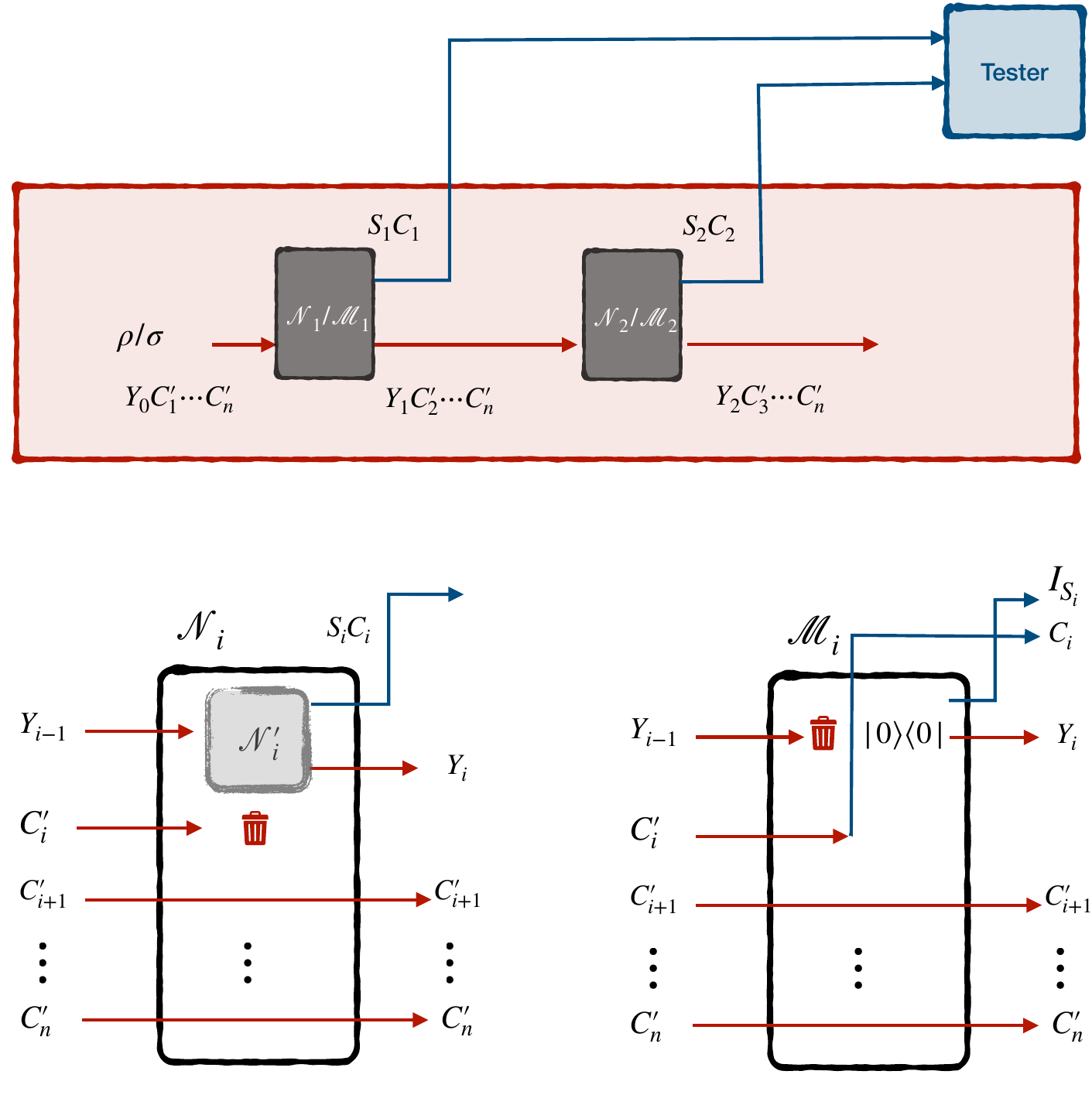}
    \caption{Illustration of the setting of the entropy accumulation.}
    \label{fig:setting EAT}
\end{figure}

As a corollary, we get a weak form of the entropy accumulation statement for the smoothed max-entropy obtained in~\cite{metger2022generalised}. We recall that the smoothed max-entropy is defined as
\begin{align}
    H_{\max}^{\ve}(B|C)_{\rho} = \log \inf_{\substack{\tilde{\rho}_{BC} \in \PSD(BC) \\ \tr(\tilde{\rho}) \leq 1 \\ P(\rho, \tilde{\rho}) \leq \ve}} \sup_{\sigma_C \in \density(C)} \left\| \tilde{\rho}_{BC}^{\frac{1}{2}} \id_{B} \otimes \sigma_C^{\frac{1}{2}} \right\|_1^2,
\end{align}
where $P(\rho,\sigma) = \sqrt{1-F(\rho,\sigma)^2}$ is the purifided distance with $F(\rho,\sigma) := \|\sqrt{\rho}\sqrt{\sigma}\|_1 + \sqrt{(1-\tr\rho)(1-\tr\sigma)}$.

\begin{corollary}($H_{\max}$-entropy accumulation.)
    Let $\cN_i' \in \CPTP(Y_{i-1}\!:\!S_i C_i Y_{i})$ be quantum channels and $\rho_{Y_0} \in \density(Y_0)$ be a quantum state. Define the state $\rho_{S_1 \dots S_n C_1 \dots C_n Y_n} = \prod_{i=1}^n \cN_i' (\rho_{Y_0})$. We have for $\ve \in [0,1/2]$,
    \begin{align}
        H_{\max}^{\ve}(S_1 \dots S_n | C_1 \dots C_n)_{\rho} \leq \sum_{i=1}^n \sup_{\omega \in \density(Y_{i-1})} H(S_i | C_i)_{\cN'(\omega)} + K n^{2/3} \log n \log^{1/3} \frac{1}{\ve},
    \end{align}
    where $K$ is a constant only depending on $\max_{i} \dim S_i C_i$.
\end{corollary}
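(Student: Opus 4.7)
The plan is to derive the corollary by specializing Theorem~\ref{thm:reat} with $\cN_i = \cN_i'$ and with each $\cM_i$ chosen as a \emph{replacer} channel, so that the per-step regularized divergence reduces to a Shannon conditional entropy via the identity $D(\rho_{SC} \| I_S \otimes \sigma_C) = -H(S|C)_\rho + D(\rho_C \| \sigma_C)$. Concretely, for reference states $\sigma_{C_i}^{(i)} \in \density(C_i)$ (to be fixed below) and any auxiliary states $\tau_{Y_i} \in \density(Y_i)$, I would define the CP maps $\cM_i : Y_{i-1} \to S_i C_i Y_i$ by $\cM_i(\rho) = \tr(\rho) \cdot I_{S_i} \otimes \sigma_{C_i}^{(i)} \otimes \tau_{Y_i}$. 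Sequential composition gives $\prod_i \cM_i(\rho_{Y_0}) = I_{S_1 \dots S_n} \otimes \sigma^* \otimes \tau_{Y_n}$ with $\sigma^* := \bigotimes_i \sigma_{C_i}^{(i)}$, so Theorem~\ref{thm:reat}, after tracing out the final $Y_n$ system, produces a lower bound on $D_{\Hypo, \ve}(\rho_{S_1 \dots S_n C_1 \dots C_n} \| I \otimes \sigma^*)$ in terms of $\sum_i D^{\inf, \reg}(\tr_{Y_i} \circ \cN_i' \| \tr_{Y_i} \circ \cM_i)$.

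Each such summand obeys the single-copy bound
\[
D^{\inf}(\tr_{Y_i} \circ \cN_i' \| \tr_{Y_i} \circ \cM_i) = \inf_\omega D(\cN_i'(\omega)_{S_i C_i} \| I \otimes \sigma_{C_i}^{(i)}) \geq -\sup_\omega H(S_i|C_i)_{\cN_i'(\omega)},
\]
by the identity above and non-negativity of $D$. Applying the same computation to $m$-fold tensor powers and using strong subadditivity $H(S_i^m | C_i^m) \leq m \sup_\omega H(S_i|C_i)_{\cN_i'(\omega)}$ extends this to the regularized divergence. Combining with the conclusion of Theorem~\ref{thm:reat} then yields
\[
D_{\Hypo, \ve}(\rho_{SC} \| I \otimes \sigma^*) \geq -\sum_{i=1}^n \sup_\omega H(S_i|C_i)_{\cN_i'(\omega)} - O\bigl(n^{2/3}\log n \log^{1/3}(1/\ve)\bigr).
\]

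The final step is to convert this lower bound into an upper bound on $H_{\max}^\ve(S_1 \dots S_n | C_1 \dots C_n)_\rho$, using the characterization of $H_{\max}^\ve$ as a smoothed form of $D_{\Sand, 1/2}$ together with inequalities of the type $H_{\max}^\ve(B|C) \leq -D_{\Sand, 1/2}(\tilde{\rho} \| I_B \otimes \sigma) \leq -D_{\Hypo, \ve'}(\rho \| I_B \otimes \sigma) + O(\log(1/\ve))$, valid for appropriate $\tilde{\rho}$ close to $\rho$, any $\sigma$, and a suitable $\ve'$. I expect the main obstacle to be aligning the inf/sup structure in the definition of $H_{\max}^\ve$ with the product reference $\sigma^*$ produced by the replacer maps: since Theorem~\ref{thm:reat} only yields bounds for states of the product form $I \otimes \sigma^*$ but the definition of $H_{\max}^\ve$ allows arbitrary correlated $\sigma$, the natural resolution is to choose each $\sigma_{C_i}^{(i)}$ as a minimizer of $\inf_\sigma \sup_\omega D(\cN_i'(\omega)_{S_i C_i} \| I \otimes \sigma)$ at the corresponding step, and then invoke a minimax / duality argument to close the gap. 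The technical hypothesis \eqref{eq: assumptionC N M} is straightforward to verify for our replacer $\cM_i$: $\log\tr(\cM_i(\sigma)) = \log\dim S_i$ is a constant, and the Petz-$3/2$ divergence between tensor-power outputs of $\cN_i'$ and $\cM_i$ is uniformly $O(m)$ by standard dimension-based estimates.
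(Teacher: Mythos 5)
Your overall strategy (specialize Theorem~\ref{thm:reat} to replacer-type $\cM_i$, convert the per-step divergence to $-H(S_i|C_i)$ via $D(\rho_{SC}\|\id_S\otimes\sigma_C)=-H(S|C)_\rho+D(\rho_C\|\sigma_C)$, kill the regularization with chain rule plus strong subadditivity, and finish with a $D_{\Hypo,\ve}$-versus-$H_{\max}^{\ve}$ comparison) matches the paper's. But there is a genuine gap exactly at the point you flag as "the main obstacle", and the fix you sketch does not work. With your choice $\cM_i(\rho)=\tr(\rho)\,\id_{S_i}\otimes\sigma^{(i)}_{C_i}\otimes\tau_{Y_i}$, the accumulation theorem yields a lower bound on $D_{\Hypo,\ve}(\rho_{S_1\dots S_nC_1\dots C_n}\|\id\otimes\sigma^*)$ for the \emph{one fixed product state} $\sigma^*=\bigotimes_i\sigma^{(i)}_{C_i}$. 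What the corollary needs (via Proposition~\ref{prop: DH Hmax}, equivalently via the $\sup_{\sigma_C}$ inside the definition of $H_{\max}^{\ve}$) is a lower bound on $\inf_{\sigma}D_{\Hypo,\ve}(\rho\|\id\otimes\sigma_{C_1\dots C_n})$ that is uniform over \emph{all} reference states $\sigma_{C_1\dots C_n}$, including ones correlated across rounds. A bound at a single point says nothing about the infimum, and there is no minimax statement to invoke: the quantity is already an infimum over $\sigma$ of a fixed functional, not a saddle point, and choosing $\sigma^{(i)}$ as a per-step minimizer of $\inf_\sigma\sup_\omega D(\cN_i'(\omega)\|\id\otimes\sigma)$ still only controls one product $\sigma^*$. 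Any attempt to reduce a correlated $\sigma$ to a product one by operator domination costs a factor polynomial in $\dim(C_1\dots C_n)$, i.e.\ an additive $O(n)$ loss, which destroys the result.

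The paper's resolution is structural rather than variational: it enlarges the uncontrolled memory registers to $A_{i+1}=Y_iC'_{i+1}\dots C'_n$ with $C'_j$ isomorphic to $C_j$, takes $\cN_i=\cN_i'\otimes\tr_{C'_i}\otimes\operatorname{id}$, and defines $\cM_i(X_{Y_{i-1}C'_i\dots C'_n})=\id_{S_i}\otimes X_{C_iC'_{i+1}\dots C'_n}\otimes\proj{0}_{Y_i}$, i.e.\ $\cM_i$ \emph{swaps the $i$-th auxiliary register into place} as $C_i$. Feeding the test state $\proj{0}_{Y_0}\otimes\sigma_{C'_1\dots C'_n}$ with an \emph{arbitrary} (correlated) $\sigma$ into the $\cM$-branch produces the output $\id_{S_1\dots S_n}\otimes\sigma_{C_1\dots C_n}$, while the right-hand side of Theorem~\ref{thm:reat} is independent of $\sigma$; the infimum over $\sigma$ then comes for free, and the per-step term still evaluates to $-\sup_\omega H(S_i|C_i)_{\cN_i'(\omega)}$ because the input optimization of $\cM_i$ absorbs the optimization over $\sigma_{C_i}$. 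You would need this (or an equivalent) device to close your argument. Separately, your verification of condition~\eqref{eq: assumptionC N M} is too quick: the condition concerns $D_{\Petz,3/2}$ evaluated at the Petz-$\alpha$ optimizers, and the paper needs the operator-Jensen argument of Lemma~\ref{lem: assumptionC for cond entr} (which fails for exponents above $3/2$) to get a bound depending only on $\dim S_i$; with your fixed $\sigma^{(i)}_{C_i}$ the analogous bound would depend on the spectrum of $\sigma^{(i)}$, not merely on dimensions.
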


\begin{proof}
We apply the relative entropy accumulation theorem as stated in the main text with the following replacements: letting $C'_i$ be isomorphic to $C_i$, we set $A_{i+1} \leftarrow Y_{i} C'_{i+1} \dots C'_n$ for $i=0$ to $n-1$, $A_{n+1} \leftarrow Y_n$,  $B_i \leftarrow S_i C_i$ for $i=1$ to $n$, $\cN_i \leftarrow \cN_{i}' \otimes \tr_{C'_{i}} \otimes \operatorname{id}_{C'_{i+1} \dots C'_n}$ (here, $\operatorname{id}$ refers to the identity map) and finally $\cM_i$ is defined as $\cM_i(X_{Y_{i-1}C'_i \dots C'_n}) = \id_{S_i} \otimes X_{C_i C'_{i+1} \dots C'_{n}} \otimes \proj{0}_{Y_i}$, where $\proj{0}_{Y_i}$ is an arbitrary fixed state in $\density(Y_i)$. An illustrative figure is given in Figure~\ref{fig:setting EAT}.

Theorem~\ref{thm:reat} gives for the left hand side of~\eqref{eq: reat eq} and any $\sigma \in \density(C_1' \dots C_n')$:
\begin{align}
D_{\Hypo, \ve}&\left(\tr_{A_{n+1}} \circ \prod_{i=1}^n \cN_i (\rho_{Y_0} \otimes \sigma_{C'_1 \dots C'_n}) \bigg\| \tr_{A_{n+1}} \circ \prod_{i=1}^n \cM_i (\proj{0}_{Y_0} \otimes \sigma_{C'_1 \dots C'_n})\right)\notag \\
    &= D_{\Hypo, \ve}(\rho_{S_1 \dots S_n C_1 \dots C_n} \| \id_{S_1 \dots S_n} \otimes \sigma_{C_1 \dots C_n}).
\end{align}
As the right hand side of~\eqref{eq: reat eq} does not depend on $\sigma$, we can take an infimum over $\sigma$ and use~Proposition~\ref{prop: DH Hmax} to get the following
\begin{align}
\inf_{\sigma} D_{\Hypo, \ve}&\left(\tr_{A_{n+1}} \circ \prod_{i=1}^n \cN_i (\rho_{Y_0} \otimes \sigma_{C'_1 \dots C'_n}) \bigg\| \tr_{A_{n+1}} \circ \prod_{i=1}^n \cM_i (\proj{0}_{Y_0} \otimes \sigma_{C'_1 \dots C'_n})\right) \notag\\
    &\leq - H_{\max}^{\sqrt{2\ve}}(S_1 \dots S_n | C_1 \dots C_n)_{\rho}.
\end{align}

On the right hand side of~\eqref{eq: reat eq}, we have terms of the form $D^{\inf,\reg}(\tr_{A_{i+1}} \circ \cN_i \| \tr_{A_{i+1}} \circ \cM_i)$. Note that for $\omega \in \density(Y_{i-1}C'_{i} \dots C'_n)$, we have $(\tr_{A_{i+1}} \circ \cN_i)(\omega) = \tr_{Y_i}\cN_i'(\omega_{Y_{i-1}})$ and $(\tr_{A_{i+1}} \circ \cM_i)(\omega) = \id_{S_i} \otimes \omega_{C_i}$. As a result, $D^{\inf}(\tr_{A_{i+1}} \circ \cN_i \| \tr_{A_{i+1}} \circ \cM_i) = \inf_{\omega \in \density(Y_{i-1})} - H(S_i | C_i)_{\cN_i'(\omega)}$, where we used the fact that $-H(B|C)_{\rho} = \inf_{\sigma \in \density(C)} D(\rho_{BC} \| \id_{B} \otimes \sigma_{C})$. We now need to evaluate the regularization:
\begin{align}
    D^{\inf,\reg}&(\tr_{A_{i+1}} \circ \cN_i \| \tr_{A_{i+1}} \circ \cM_i) \notag \\
    &= \inf_{m \geq 1} \frac{1}{m}  \inf_{\substack{\omega \in \density((Y_{i-1})^{\otimes m}) \\ \sigma \in \density((C_{i} \dots C_n)^{\otimes m})}} D((\tr_{Y_i}\circ\cN_i')^{\otimes m}(\omega) \| \id_{S_{i,1} \dots S_{1,m}} \otimes \sigma_{C_{i,1} \dots C_{i,m}}) \\
    &= - \sup_{m \geq 1} \sup_{\omega} \frac{1}{m} H(S_{i,1} \dots S_{i,m} | C_{i,1} \dots C_{i,m})_{\cN_i'^{\otimes m}(\omega)} \\
    &= - \sup_{m \geq 1} \sup_{\omega} \frac{1}{m} \sum_{j=1}^m H(S_{i,j} | C_{i,1} \dots C_{i,m} S_{i,1} \dots S_{i,j-1})_{\cN_i'^{\otimes m}(\omega)} \\
    &\geq - \sup_{m \geq 1} \sup_{\omega} \frac{1}{m} \sum_{j=1}^m H(S_{i,j} | C_{i,j} )_{\cN_i'(\omega_{Y_{i,j}})},
\end{align}
where we used the chain rule and then strong subadditivity of the von Neumann entropy.
Now note that each term $H(S_{i,j} | C_{i,j} )_{\cN_i'(\omega_{Y_{i,j}})} \leq \sup_{\omega} H(S_i | C_i)_{\cN'_i(\omega)}$. As a result, $D^{\inf,\reg}(\tr_{A_{i+1}} \circ \cN_i \| \tr_{A_{i+1}} \circ \cM_i) \geq - \sup_{\omega \in \density(Y_{i-1})} H(S_i | C_i)_{\cN_i'(\omega)}$
which shows that in this case, the regularization is not needed. In order to apply Theorem~\ref{thm:reat}, we need to check condition~\eqref{eq: assumptionC N M}. First, we have for any $\omega \in \density(Y_i)$, $\tr(\cM_i(\omega)) = \dim S_i$. In addition, for $\alpha \in [1/2,1]$, we have
\begin{align}
    D^{\inf}_{\Petz,\alpha}&((\tr_{A_{i+1}} \circ \cN_i)^{\otimes m} \| (\tr_{A_{i+1}} \circ \cM_i)^{\otimes m}) \notag\\
    &=\inf_{\substack{\omega \in \density((Y_{i-1})^{\otimes m}) \\ \sigma \in \density((C_{i} \dots C_n)^{\otimes m})}} D_{\Petz, \alpha}((\tr_{Y_i}\cN_i')^{\otimes m}(\omega_{Y_{i-1,1} \dots Y_{i-1,m}}) \| \id_{S_{i,1} \dots S_{1,m}} \otimes \sigma_{C_{i,1} \dots C_{i,m}}) \label{eq:expression-dalpha-opt} \\
    &= - \sup_{\omega} H^{\uparrow}_{\Petz, \alpha}(S_{i,1} \dots S_{i,m} | C_{i,1} \dots C_{i,m})_{\cN_i'^{\otimes m}(\omega)},
\end{align}
using the notation $H^{\uparrow}_{\Petz, \alpha}(D| E)_{\rho} = -\inf_{\sigma \in \density(E)}D_{\Petz,\alpha}(\rho_{DE} \| \id_{D} \otimes \sigma_E)$ from~\cite{tomamichel2014relating}. It is shown in~\cite[Lemma 1]{tomamichel2014relating}, that an explicit choice of $\sigma$ achieves this infimum namely $\sigma^{(\alpha)}_{E} = \frac{(\tr_D \rho_{DE}^{\alpha})^{1/\alpha}}{\tr(\tr_D \rho_{DE}^{\alpha})^{1/\alpha}}$. But in~Lemma~\ref{lem: assumptionC for cond entr}, we showed that for this choice $D_{\Petz,3/2}(\rho_{DE} \| \id_{D} \otimes \sigma^{(\alpha)}_{E}) \leq 4 \log \dim D$. Applying this to the state $\rho = (\tr_{A_{i+1}} \circ \cN_i)^{\otimes m}(\omega)$, an optimal choice for $\sigma$ in~\eqref{eq:expression-dalpha-opt} is given by $\sigma_{m}^{(\alpha)} = \frac{(\tr_{B_{i}^{\otimes m}} \rho^{\alpha})^{1/\alpha}}{\tr(\tr_{B_{i}^{\otimes m}} \rho^{\alpha})^{1/\alpha}}$. We get that for any $\omega \in \density(R_{i-1}^{\otimes m})$,
\begin{align}
    D_{\Petz,3/2}((\tr_{A_{i+1}} \circ \cN_i)^{\otimes m}(\omega) \| \id_{S_i}^{\otimes m} \otimes \sigma^{(\alpha)}_{m}) \leq 4 m \log \dim S_i.
\end{align}
This means that choosing $C = 16 \max_{i} \log \dim S_i$ satisfies condition~\eqref{eq: assumptionC N M}.
\end{proof}

\smallskip

Note that the second order term we achieve with our proof technique is worse than the one achieved in~\cite{dupuis2020entropy} and~\cite{metger2022generalised}. In addition, the statement of~\cite{metger2022generalised} is stronger in that it also includes conditioning on the system $Y_i$ provided a non-signalling assumption is satisfied. Nevertheless, we believe that our new proof technique, which is more naturally adapted to find upper bounds on the max-entropy $H_{\max}$ (as opposed to the techniques of~\cite{metger2022generalised} which naturally apply to the min-entropy $H_{\min}$) could lead to insights and improvements for the applications of entropy accumulation. However, this is outside the scope of this paper and we leave it for future work.

\section{Useful properties}

\begin{lemma}
\label{lem: assumptionC for cond entr}
    Let $\alpha \in [1/2,1]$, $\rho_{AB} \in \density(AB)$ and $\sigma^{(\alpha)}_B = \frac{\left(\tr_A\rho_{AB}^{\alpha}\right)^{\frac{1}{\alpha}}}{Z}$, where $Z = \tr \left(\left(\tr_A\rho_{AB}^{\alpha}\right)^{\frac{1}{\alpha}}\right)$. Let $d_A = \dim A$. Then it holds that
    \begin{align}
        D_{\Petz,\frac{3}{2}}(\rho_{AB} \| \id_{A} \otimes \sigma_{B}^{(\alpha)}) \leq 4\log d_A.
    \end{align}
\end{lemma}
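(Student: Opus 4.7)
The strategy is to expand
\[
D_{\Petz, 3/2}(\rho_{AB} \| \id_A \otimes \sigma_B^{(\alpha)}) = 2\log \tr\bigl[\rho^{3/2}(\id_A \otimes \sigma_B^{(\alpha)})^{-1/2}\bigr]
\]
and use the explicit form $\sigma_B^{(\alpha)} = \tau^{1/\alpha}/Z$ with $\tau = \tr_A \rho^{\alpha}$ and $Z = \tr(\tau^{1/\alpha})$ to rewrite the divergence as $\log Z + 2\log \tr[\rho^{3/2}(\id_A \otimes \tau^{-1/(2\alpha)})]$. I will bound each piece separately.

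For $\log Z$: apply the operator Jensen (Choi) inequality for the operator-concave function $x\mapsto x^{\alpha}$ (valid for $\alpha \in [0,1]$) to the unital completely positive map $\Phi(X) = \tr_A(X)/d_A$. This yields $\Phi(\rho^{\alpha}) \leq \Phi(\rho)^{\alpha}$, equivalently $\tau \leq d_A^{1-\alpha}\,\rho_B^{\alpha}$. Since $X\mapsto \tr(X^{1/\alpha})$ is L\"owner monotone whenever $1/\alpha \geq 1$ (via Weyl's inequalities applied to the eigenvalue sum), this gives $Z \leq d_A^{(1-\alpha)/\alpha} \leq d_A$ for $\alpha \geq 1/2$, hence $\log Z \leq \log d_A$.

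For the main trace term: first, $\rho \leq \id$ and $\alpha \leq 1$ imply $\rho^{\alpha} \geq \rho$, so $\tau \geq \rho_B$; and operator anti-monotonicity of $x\mapsto x^{-p}$ for $p = 1/(2\alpha) \in (0,1]$ (where the assumption $\alpha \geq 1/2$ is crucial) then yields $\tau^{-1/(2\alpha)} \leq \rho_B^{-1/(2\alpha)}$, so it suffices to bound $\tr[\rho^{3/2}(\id_A \otimes \rho_B^{-1/(2\alpha)})]$. Second, I use the universal operator inequality $\rho \leq d_A (\id_A \otimes \rho_B)$, which is $H_{\min}(A|B)_\rho \geq -\log d_A$ realized at the canonical choice $\sigma_B = \rho_B$. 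Writing $\rho^{3/2} = \rho^{1/4}\,\rho\,\rho^{1/4}$ and noting that the surrounding block $\rho^{1/4}(\id_A \otimes \rho_B^{-1/(2\alpha)})\rho^{1/4}$ is positive semidefinite (it equals $K^{\ast}K$ with $K = (\id_A \otimes \rho_B^{-1/(4\alpha)})\rho^{1/4}$), we may insert $\rho \leq d_A(\id_A \otimes \rho_B)$ in the middle. Cyclically rearranging so that the $\rho_B$ produced by this step is adjacent to the pre-existing $\rho_B^{-1/(2\alpha)}$ and merging them, together with a further Jensen step $\tr_A(\rho^{1/2}) \leq d_A^{1/2}\rho_B^{1/2}$, yields $\tr[\rho^{3/2}(\id_A \otimes \tau^{-1/(2\alpha)})] \leq d_A^{3/2}$.

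Combining the two bounds gives $D_{\Petz,3/2}(\rho \| \id_A \otimes \sigma_B^{(\alpha)}) \leq \log d_A + 2\log d_A^{3/2} = 4\log d_A$. The main obstacle is the last step: a naive H\"older or Cauchy--Schwarz applied to $\tr[\rho^{3/2}(\id_A \otimes \rho_B^{-1/(2\alpha)})]$ introduces unwanted $d_B$ factors through terms like $\tr(\rho_B^{p}) \leq d_B^{1-p}$. Avoiding this requires the specific cyclic arrangement above, in which all $\rho_B$-powers end up adjacent and merge into a single factor that collapses via $\tr(\rho_B) = 1$, leaving only $d_A$-dependence.
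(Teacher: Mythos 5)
Your treatment of $\log Z$ is correct and coincides with the paper's argument: operator Jensen for the operator-concave $x\mapsto x^{\alpha}$ applied to $X\mapsto\tr_A(X)/d_A$ gives $\tr_A\rho_{AB}^{\alpha}\le d_A^{1-\alpha}\rho_B^{\alpha}$, hence $Z\le d_A^{(1-\alpha)/\alpha}\le d_A$. The gap is in the main trace term, and it is twofold. First, the ``specific cyclic arrangement'' you invoke does not exist. Writing $\tr[\rho^{3/2}G]=\tr[\rho\cdot\rho^{1/4}G\rho^{1/4}]$ with $G=\id_A\otimes\rho_B^{-1/(2\alpha)}$ and inserting $\rho\le d_A(\id_A\otimes\rho_B)$ leaves $d_A\tr[(\id_A\otimes\rho_B)\,\rho^{1/4}G\rho^{1/4}]$, in which $\id_A\otimes\rho_B$ and $G$ are separated by a factor $\rho^{1/4}$ on \emph{both} sides of the cyclic trace; since $\rho^{1/4}$ does not commute with $\id_A\otimes\rho_B$, no cyclic permutation makes the two $B$-marginal factors adjacent, so the advertised merge-and-collapse via $\tr(\rho_B)=1$ cannot be carried out. (Even formally, the $\rho_B$-exponents you would be merging sum to $3/2-1/(2\alpha)$, which equals $1$ only at $\alpha=1$.) Second, and more fundamentally, the intermediate target you reduce to is false: $\tr[\rho^{3/2}(\id_A\otimes\rho_B^{-1/(2\alpha)})]$ is \emph{not} bounded by $d_A^{3/2}$, nor by any function of $d_A$ alone. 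Take $\alpha=1/2$ and $\rho_{AB}=\proj{0}_A\otimes\id_B/d_B$: then $\rho_B^{-1/(2\alpha)}=\rho_B^{-1}=d_B\,\id_B$ and the quantity equals $\sqrt{d_B}$, which exceeds $d_A^{3/2}$ once $d_B>d_A^{3}$, whereas the true quantity $\tr[\rho^{3/2}(\id_A\otimes\tau^{-1/(2\alpha)})]$ with $\tau=\tr_A\rho^{\alpha}$ equals $1$ for this state. So the step ``$\tau^{-1/(2\alpha)}\le\rho_B^{-1/(2\alpha)}$, hence it suffices to bound $\tr[\rho^{3/2}(\id_A\otimes\rho_B^{-1/(2\alpha)})]$'' discards exactly the cancellation that makes the lemma true: the $d_A$-only bound comes from the fact that $\rho^{3/2}$ and $\rho^{\alpha}$ share eigenvectors, each of Schmidt rank at most $d_A$, so the marginal $\tr_A\rho^{3/2}$ is matched against $\tau$ itself eigenvector by eigenvector; replacing $\tau$ by a power of $\rho_B$ severs that matching.

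For the comparison: the paper's own proof of this term is also a $\tau\to\rho_B$ substitution --- it uses $\rho^{3/2}\le\rho$ and $\tau^{-1/(2\alpha)}\le\rho_B^{-1/(2\alpha)}\le\rho_B^{-1}$ to reach $\tr[\rho_{AB}(\id_A\otimes\rho_B^{-1})]$, which it then bounds by $d_A$. Note that this last quantity equals $\rank(\rho_B)$, which in general is only bounded by $d_B$, so the paper's displayed chain is subject to the same criticism as yours; a robust argument should keep $\tau=\tr_A\rho^{\alpha}$ in place and exploit the common eigenbasis of $\rho^{3/2}$ and $\rho^{\alpha}$ (via $\rho=\sum_k q_k\proj{\psi_k}$, $\tau\ge q_k^{\alpha}\tr_A\proj{\psi_k}$, and $\rank(\tr_A\proj{\psi_k})\le d_A$). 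As written, your proposal does not establish the lemma: the decisive inequality is asserted rather than proved, the mechanism claimed for it is unavailable, and the statement it would need is false.
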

Note that the choice of the parameter $\frac{3}{2}$ is not arbitrary and this lemma does not hold for higher values. See the discussion in the proof of ~\cite[Corollary III.5]{dupuis2019entropySecondOrder}.

\smallskip
\begin{proof}
    We have
    \begin{align}
        \tr\left(\rho_{AB}^{3/2} (\id_{A} \otimes \sigma_{B}^{(\alpha)})^{-1/2}\right)
        &= Z^{1/2} \tr\left(\rho_{AB}^{3/2} \id_{A} \otimes (\tr_{A} \rho_{AB}^{\alpha})^{-\frac{1}{2\alpha}}\right).
    \end{align}
    We start by showing that $Z \leq d_{A}$. In fact, we use the operator Jensen inequality for the operator concave function $x \mapsto x^{\alpha}$ as follows:
    \begin{align}
        \tr_{A}(\rho_{AB}^{\alpha}) 
        &= d_A \sum_{a} \frac{\bra{a}}{\sqrt{d_A}} \rho_{AB}^{\alpha} \frac{\ket{a}}{\sqrt{d_A}} \\
        &\leq d_A \left(\sum_{a} \frac{\bra{a}}{\sqrt{d_A}} \rho_{AB} \frac{\ket{a}}{\sqrt{d_A}}\right)^{\alpha} \\
        &= d_A^{1-\alpha} \rho_B^{\alpha}.
    \end{align}
    As a result, we get
    \begin{align}
        Z &\leq \tr( (d_A^{1-\alpha} \rho_B^{\alpha})^{\frac{1}{\alpha}}) = d_{A}^{\frac{1-\alpha}{\alpha}} \leq d_A.
    \end{align}
    For the inequality, we used the fact that the function $x \mapsto x^{1/\alpha}$ is monotone and continuous and thus $X \mapsto \tr(X^{\frac{1}{\alpha}})$ is monotone (see e.g.,~\cite[Section 2.2]{carlen2010trace}).

    Now let us consider 
    \begin{align}
    \tr\left(\rho_{AB}^{3/2} \id_{A} \otimes (\tr_{A} \rho_{AB}^{\alpha})^{\frac{1}{\alpha}}\right) 
    &\leq \tr\left(\rho_{AB} \id_{A} \otimes (\tr_{A} \rho_{AB})^{-\frac{1}{2\alpha}}\right) \\
    &\leq \tr(\rho_{AB} \rho_{B}^{-1}) \\
    &\leq d_A,
    \end{align}
    where we used for the first inequality the fact that $\rho_{AB}^{3/2} \leq \rho_{AB}$, $\rho_{AB}^{\alpha} \geq \rho_{AB}$ and the operator anti-monotonicity of the function $x \mapsto x^{-\frac{1}{2\alpha}}$, and for the second inequality the fact that $\rho_{B}^{\frac{1}{2\alpha}} \geq \rho_{B}$. As a result,
    \begin{align}
        D_{\Petz,\frac{3}{2}}(\rho_{AB} \| \id_{A} \otimes \sigma_B^{(\alpha)}) \leq 2 \log (d_A^{1/2} d_A) \leq 4 \log d_A.
    \end{align}
\end{proof}

\begin{proposition}
\label{prop: DH Hmax}
    For $0 \leq \ve < \frac{1}{2}$, we have
    \begin{align}
    H_{\max}^{\sqrt{2\ve}}(B|C)_{\rho} \leq - \inf_{\sigma_{C} \in \density(C)} D_{\Hypo, \ve}(\rho_{BC} \| \id_{B} \otimes \sigma_C).
\end{align}
\end{proposition}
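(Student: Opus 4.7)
The plan is to show that, for any measurement $M$ feasible in $\beta_\ve(\rho_{BC}\|\id_B\otimes\sigma_C)$, the ``gentle-measurement'' state $\tilde\rho:=\sqrt{M}\,\rho_{BC}\,\sqrt{M}$ is admissible for the outer infimum in $H_{\max}^{\sqrt{2\ve}}$, and its fidelity-type quantity with every $\id_B\otimes\sigma_C$ is uniformly controlled by $\tr[M(\id_B\otimes\sigma_C)]$. Combined with a minimax rewriting of the RHS, this produces the desired inequality.

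\textbf{Minimax step.} Setting $\cM_\ve := \{M: 0\leq M\leq \id_{BC},\ \tr[\rho(\id-M)]\leq\ve\}$, both $\cM_\ve$ and $\density(C)$ are convex and compact, and the objective $\tr[M(\id_B\otimes\sigma_C)]$ is bilinear, so Sion's theorem yields
\begin{align*}
\sup_{\sigma_C\in\density(C)}\beta_\ve(\rho_{BC}\|\id_B\otimes\sigma_C) = \inf_{M\in\cM_\ve}\sup_{\sigma_C}\tr[\sigma_C\tr_B M] = \inf_{M\in\cM_\ve}\|\tr_B M\|_\infty.
\end{align*}

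\textbf{Admissibility of $\tilde\rho$.} Fix $M\in\cM_\ve$ and let $|\psi_\rho\> = (\sqrt{\rho_{BC}}\otimes \id_R)|\Phi_R\>$ denote the canonical purification of $\rho$. Then $(\sqrt M\otimes \id_R)|\psi_\rho\>$ is a subnormalized purification of $\tilde\rho$, so Uhlmann's theorem gives
\begin{align*}
F(\rho,\tilde\rho) \geq \<\psi_\rho|(\sqrt M\otimes \id_R)|\psi_\rho\> = \tr[\rho\sqrt M] \geq \tr[\rho M] \geq 1-\ve,
\end{align*}
using $\sqrt M\geq M$ on the spectrum $[0,1]$. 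Hence $P(\rho,\tilde\rho)=\sqrt{1-F^2}\leq \sqrt{1-(1-\ve)^2}\leq \sqrt{2\ve}$, and $\tr\tilde\rho=\tr[M\rho]\leq 1$, so $\tilde\rho$ is feasible for the outer infimum defining $H_{\max}^{\sqrt{2\ve}}(B|C)_\rho$.

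\textbf{Fidelity bound and conclusion.} Writing $A:=\sqrt\rho\sqrt M$, one has $A^\dagger A=\tilde\rho$; the polar decomposition therefore furnishes a partial isometry $V$ with $\tilde\rho^{1/2}=V^\dagger A$, and in particular $\|\tilde\rho^{1/2}Y\|_1\leq \|AY\|_1$ for every $Y$. Taking $Y=\id_B\otimes\sigma_C^{1/2}$ and applying the Schatten Cauchy--Schwarz $\|X_1X_2\|_1\leq \|X_1\|_2\|X_2\|_2$ with $X_1=\sqrt\rho$ and $X_2=\sqrt M(\id_B\otimes\sigma_C^{1/2})$ gives
\begin{align*}
\|\tilde\rho^{1/2}(\id_B\otimes\sigma_C^{1/2})\|_1^2 \leq \|\sqrt\rho\|_2^2\cdot\|\sqrt M(\id_B\otimes\sigma_C^{1/2})\|_2^2 = \tr[M(\id_B\otimes\sigma_C)].
\end{align*}
Taking $\sup_{\sigma_C}$ and using the definition of $H_{\max}^{\sqrt{2\ve}}$ yields $2^{H_{\max}^{\sqrt{2\ve}}(B|C)_\rho}\leq \|\tr_B M\|_\infty$; infimizing over $M\in\cM_\ve$ and invoking the minimax identity completes the proof after taking $\log$. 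The main subtlety is engineering a \emph{single} $\tilde\rho$ whose bound works uniformly over all $\sigma_C$ (rather than one $\tilde\rho$ per $\sigma_C$); this succeeds precisely because the bound $\tr[M(\id_B\otimes\sigma_C)]$ is linear in $\sigma_C$, so that passing to $\sup_{\sigma_C}$ telescopes into $\|\tr_B M\|_\infty$, which is exactly the dual quantity produced by Sion's theorem.
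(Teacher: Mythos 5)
Your proof is correct and follows the same overall strategy as the paper's: a minimax swap of the infimum over tests and the supremum over $\sigma_C$, the gentle-measurement candidate $\tilde\rho=\sqrt{M}\rho\sqrt{M}$, and the key bound $\|\tilde\rho^{1/2}(\id_B\otimes\sigma_C^{1/2})\|_1^2\leq\tr[M(\id_B\otimes\sigma_C)]$ holding uniformly in $\sigma_C$. The only substantive difference is in how the ingredients are established: the paper cites a minimax lemma for hypothesis testing between convex sets and proves the fidelity bound via the semidefinite-programming characterization of fidelity (its Lemma on $\|(\sqrt{M}\rho\sqrt{M})^{1/2}\sigma^{1/2}\|_1^2\leq\tr[M\sigma]$), whereas you invoke Sion's theorem directly and derive the fidelity bound from the polar decomposition of $\sqrt{\rho}\sqrt{M}$ together with the Schatten Cauchy--Schwarz inequality; you likewise reprove the gentle-measurement estimate from Uhlmann's theorem rather than citing it. Your versions are more elementary and self-contained, at the cost of proving the trace-norm bound only for $\tr\rho=1$ (which is all that is needed here); both arguments are sound.
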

\begin{proof}
Using \cite[Lemma 31]{fang2024generalized} with $\sA = \{\rho_{BC}\}$ and $\sB = \{\id_{B} \otimes \sigma_C : \sigma_{C} \in \density(C)\}$, we can write
\begin{align}
    -\inf_{\sigma_{C} \in \density(C)} D_{\Hypo, \ve}(\rho_{BC} \| \id_{B} \otimes \sigma_C) = \log \inf_{0 \leq M \leq I} \left\{ \sup_{\sigma_{C} \in \density(C)} \{\tr(M_{BC} I_B \otimes \sigma_C) : \tr(M_{BC} \rho_{BC}) \geq 1- \ve \right\}.
\end{align}
Let $M_{BC}$ be such that $\tr(M_{BC}\rho_{BC}) \geq 1-\ve$, define $\tilde{\rho}_{BC} = \sqrt{M_{BC}} \rho_{BC} \sqrt{M_{BC}}$. Then, by the gentle measurement lemma, we have $P(\rho, \tilde{\rho}) \leq \sqrt{2\ve}$ (see e.g.,~\cite[Lemma A.3]{dupuis2014generalized}). Then, using Lemma~\ref{lem:bound on fidelity}, we get
\begin{align}
    \sup_{\sigma_{C}\in \density(C)} \left\| (\sqrt{M_{BC}} \rho_{BC} \sqrt{M_{BC}})^{\frac{1}{2}} \id_B \otimes \sigma_C^{\frac{1}{2}} \right\|_1^2 \leq \sup_{\sigma_C} \tr(M_{BC} I_B \otimes \sigma_C)
\end{align}
As such,
\begin{align}
    -\inf_{\sigma_{C}  \in \density(C)} & D_{\Hypo, \ve}(\rho_{BC} \| \id_{B} \otimes \sigma_C) \notag \\
    &\geq \log \inf_{0 \leq M \leq I}  \sup_{\sigma_{C}} \left\{ \left\| (\sqrt{M_{BC}} \rho_{BC} \sqrt{M_{BC}})^{\frac{1}{2}} \id_B \otimes \sigma_C^{\frac{1}{2}} \right\|_1^2 : \tr(M_{BC} \rho_{BC}) \geq 1- \ve \right\} \\
    &\geq \log \inf_{\substack{\tilde{\rho}_{BC} \in \PSD(BC) \\ \tr(\tilde{\rho}) \leq 1 \\ P(\rho, \tilde{\rho}) \leq \sqrt{2\ve}}} \sup_{\sigma_C \in \density(C)} \left\| \tilde{\rho}_{BC}^{\frac{1}{2}} \id_{B} \otimes \sigma_C^{\frac{1}{2}} \right\|_1^2\\
    &= H_{\max}^{\sqrt{2\ve}}(B|C)_{\rho}.
\end{align}
\end{proof}
\begin{lemma}
\label{lem:bound on fidelity}
    Let $\rho \in \density(A)$ and $\sigma, M \in \PSD(A)$. Then it holds that
    \begin{align}
        \left\| (\sqrt{M}\rho\sqrt{M})^{\frac{1}{2}} \sigma^{\frac{1}{2}} \right\|_1^2 \leq \tr(M \sigma).
    \end{align}
\end{lemma}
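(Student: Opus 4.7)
The plan is to reduce the left-hand side to a Schatten $1$-norm of the simpler form $\|\sqrt{\rho}\sqrt{M}\sqrt{\sigma}\|_1$, and then apply the Schatten Hölder inequality with exponents $(2,2)$. The key manipulation is to absorb the outer square root $(\sqrt{M}\rho\sqrt{M})^{1/2}$ via the invariance of the trace norm under passage to the absolute value.

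Concretely, I would use $\|X\|_1 = \tr\sqrt{X^\dagger X}$ to rewrite
\begin{align*}
\big\|(\sqrt{M}\rho\sqrt{M})^{1/2}\sqrt{\sigma}\,\big\|_1 = \tr\sqrt{\sqrt{\sigma}\sqrt{M}\,\rho\,\sqrt{M}\sqrt{\sigma}}.
\end{align*}
Factoring $\rho = \sqrt{\rho}\,\sqrt{\rho}$ and setting $Y := \sqrt{\rho}\sqrt{M}\sqrt{\sigma}$, the expression under the square root becomes $Y^\dagger Y$, so
\begin{align*}
\big\|(\sqrt{M}\rho\sqrt{M})^{1/2}\sqrt{\sigma}\,\big\|_1 = \tr\sqrt{Y^\dagger Y} = \|Y\|_1 = \|\sqrt{\rho}\sqrt{M}\sqrt{\sigma}\|_1.
\end{align*}
This removes the nested square root essentially for free, exploiting only that $X$ and $|X|$ share the same trace norm.

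Next, I would apply the Schatten Hölder inequality $\|AB\|_1 \leq \|A\|_2 \|B\|_2$ with $A = \sqrt{\rho}$ and $B = \sqrt{M}\sqrt{\sigma}$, which gives
\begin{align*}
\|\sqrt{\rho}\sqrt{M}\sqrt{\sigma}\|_1^2 \leq \|\sqrt{\rho}\|_2^2 \cdot \|\sqrt{M}\sqrt{\sigma}\|_2^2 = \tr(\rho)\cdot \tr(M\sigma) = \tr(M\sigma),
\end{align*}
where the last equality uses $\tr(\rho) = 1$. Combined with the identity from the previous step, this yields the claim.

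I do not expect any real obstacle here: the only step requiring a small insight is the first factorization identity, which is a standard polar-decomposition trick. Everything else reduces to Cauchy–Schwarz for the Hilbert–Schmidt inner product, and the normalization $\tr(\rho)=1$ is precisely what converts the bound into the operational quantity $\tr(M\sigma)$ needed by the proof of Proposition~\ref{prop: DH Hmax}.
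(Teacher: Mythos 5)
Your proof is correct, but it takes a genuinely different route from the paper's. The paper invokes the semidefinite-programming characterization of the fidelity, $\|\sqrt{\omega}\sqrt{\theta}\|_1^2 = \min\{\tr(Z\theta) : \omega_{AE} \leq Z \otimes I_E,\ Z \geq 0\}$, and exhibits $Z = M$ as a feasible point by noting that $\sqrt{M_A}\,\rho_{AE}\,\sqrt{M_A}$ purifies $\sqrt{M}\rho\sqrt{M}$ and is dominated by $M_A \otimes I_E$. You instead observe that $(\sqrt{M}\rho\sqrt{M})^{1/2} = |A|$ for $A = \sqrt{\rho}\sqrt{M}$, so that $\|(\sqrt{M}\rho\sqrt{M})^{1/2}\sqrt{\sigma}\|_1 = \|\sqrt{\rho}\sqrt{M}\sqrt{\sigma}\|_1$ (both equal $\tr\sqrt{\sqrt{\sigma}\sqrt{M}\rho\sqrt{M}\sqrt{\sigma}}$), and then finish with the Schatten–H\"older (Cauchy--Schwarz) inequality $\|AB\|_1 \leq \|A\|_2\|B\|_2$ together with $\|\sqrt{\rho}\|_2^2 = \tr\rho = 1$ and $\|\sqrt{M}\sqrt{\sigma}\|_2^2 = \tr(M\sigma)$. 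Each step checks out, including for $M$ that is merely positive semidefinite rather than a measurement operator. Your argument is more elementary and self-contained, requiring neither purifications nor the SDP duality for fidelity; the paper's version has the advantage of staying within the operational toolkit it cites (the proof of~\cite[Proposition 4.2]{dupuis2014generalized}) and of making the feasible point $Z=M$ explicit, which is the form reused in that literature. Either proof serves Proposition~\ref{prop: DH Hmax} equally well.
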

\begin{proof}
    This fact is used in~\cite[Proposition 4.2]{dupuis2014generalized}. It uses the semidefinite program for the fidelity $\| \sqrt{\omega} \sqrt{\theta} \|_{1}^2 = \min\{ \tr(Z \theta) : \omega_{AE} \leq Z \otimes I_E, Z \geq 0\}$~\cite[Section 5]{watrous2009semidefinite}, where $\omega_{AE}$ is a purification of $\omega$. Let $\rho_{AE}$ be a purification of $\rho_A$. Then we have $\sqrt{M_A} \rho_{AE} \sqrt{M_A} \leq M_A \otimes I_{E}$ and $\sqrt{M_A} \rho_{AE} \sqrt{M_A}$ is a purification of $\sqrt{M} \rho_A \sqrt{M}$. As a result $Z = M$ is feasible for the semidefinite program above and we get the desired result.
\end{proof}

\end{document}